\newtheorem{theorem}{Theorem}
\newtheorem{lemma}{Lemma}
\newtheorem{definition}{Definition}
\newtheorem{fact}{Fact}
\newtheorem{corollary}{Corollary}
\newcommand{\cF}{\mathcal{F}}
\newcommand{\remove}[1]{}
\title{Optimal Packet-oblivious Stable Routing in Multi-hop Wireless Networks}
\author{
  Vicent  Cholvi \\
  Departament de Llenguatges i Sistemes Inform\`{a}tics\\
  Universitat Jaume I\\
   Castell\'{o}, Spain \\
   \and
 Pawe{\l} Garncarek \\
  Instytut Informatyki\\
  Uniwersytet Wroc{\l}awski\\
   Wroc{\l}aw, Poland \\
   \and
   Tomasz Jurdzi\'{n}ski\\
  Instytut Informatyki\\
  Uniwersytet Wroc{\l}awski\\
   Wroc{\l}aw, Poland \\
   \and
   Dariusz R. Kowalski\\
   School of Computer and Cyber Sciences\\
   Augusta University\\
   Augusta, Georgia, USA
  }
\begin{document}

\maketitle

\begin{abstract}
Stability is an important issue in order to characterize the performance of a network, and it has become a major topic of study in the last decade. Roughly speaking, a communication network system is said to be \emph{stable} if the number of packets waiting to be delivered (backlog) is finitely bounded at any one time. 

In this paper, we introduce  a new family of combinatorial structures, which we call \emph{universally strong selectors}, that are used to provide a set of transmission schedules. Making use of these structures, combined with some known queuing policies, we  propose a packet-oblivious routing algorithm which is working without using any global topological information, and guarantees stability for certain injection rates.  We show that this protocol is asymptotically optimal regarding the injection rate for which stability is guaranteed.

Furthermore, we also introduce a packet-oblivious routing algorithm that guarantees stability for higher traffic. This algorithm is optimal regarding the injection rate for which stability is guaranteed. However, it needs to use some global information of the system topology. 
\end{abstract}

\keywords{Wireless networks \and routing \and adversarial queuing \and interference \and stability \and packet latency}

\section{Introduction}
\label{sec:introduction}

Stability is an important issue in order to characterize the performance of a network, and it has become a major topic of study in the last decade. Roughly speaking, a communication network system is said to be {\em stable} if the number of packets waiting to be delivered (backlog) is finitely bounded at any one time. The importance of such an issue is obvious, since if one cannot guarantee stability, then one cannot hope to be able to ensure deterministic guarantees for most of the network performance metrics.

For many years, the common belief was that only overloaded queues (i.e., when the total arrival rate is greater than the service rate) could generate instability, while underloaded ones could only induce delays that are longer than desired, but always remain stable. However, this belief was shown to be wrong when it was observed that, in some networks, the backlogs in specific queues could grow indefinitely even when such queues were not overloaded~\cite{BorodinKRSW01,AndrewsAFLLK01}. These later results aroused an interest in understanding the stability properties of packet-switched networks, so that a substantial effort has been invested in that area. Among the obtained results, stability of specific scheduling policies was considered for example in~\cite{BhattacharjeeGL-SICOMP04,CholviE07,Gamarnik-SICOMP03,Goel01}.  The impact of network topologies on injection rates that guarantee stability was considered in~\cite{EchagueCF03,KoukopoulosMNS-TCSy05,LotkerPR04}. A systematic account of issues related to universal stability was given in~\cite{AlvarezBS-SICOMP04}.

Whereas in  wireline networks a node can transmit data over any outgoing link and simultaneously receive data over any incoming link, the situation is different in wireless networks. Indeed, nearby wireless signal transmissions that overlap in time can interfere with one another, to the effect that none can be  transmitted successfully. This makes the study of stability in wireless networks more complex. As in the wireline case, a substantial effort has been invested in investigating stability in that setting. Stability in  wireless networks without explicit interferences was first studied by Andrews and Zhang \cite{AndrewsZ-JACM05,AndrewsZ07} and Cholvi and Kowalski~\cite{CholviK10}.   Chlebus et al.~\cite{ChlebusKR-TALG12} and Anantharamu et al.~\cite{ANANTHARAMU201927} studied adversarial broadcasting with interferences in the case of using single-hop radio networks. In multi-hop networks with interferences, Chlebus et al.~\cite{ChlebusCK-FOMC14} considered interactions among components of routing in wireless networks, which included transmission policies, scheduling policies and control mechanisms to coordinate transmissions with scheduling. In~\cite{7950907}, the authors demonstrated that there is no routing algorithm guaranteeing stability for an injection rate greater than $1/L$, where the parameter $L$ is the largest number of nodes which a packet needs to traverse while routed to its destination. They also provided a routing algorithm that guarantees stability for injection rates smaller than $1/L$.
Their approach, however, 
is not accurate for studying stability of longer-distance packets;
therefore, in this work we study how the stability of routing depends of the \emph{conflict graph} (which we will formally define later) of the underlying wireless networks, which is independent of the lengths of the packets' routes.

\paragraph*{Our results}

In this paper, we study the stability of dynamic routing in multi-hop radio networks with a specific methodology of adversarial traffic that reflects interferences. We focus on packet-oblivious routing protocols; that is, algorithms that do not take into account any historical information about packets or carried out by packets. Such protocols are well motivated in practice, as real forwarding protocols and corresponding data-link layer architectures are typically packet-oblivious.

\begin{table*}[ht]
\centering
\begin{tabular}{|c|c|c|c|}\hline
Routing & Scheduling  & Required & Maximum   \\
 algorithm &  policies (ALG) & knowledge &  injection rate  \\ \hline \hline

\textsc{USS-plus-Alg} & LIS, SIS & Bounds on the  &  $O(1/(e \cdot \Delta^H))$ \\ 
 & NFS, FTG &  number of links  &  \\ 
 &  &   and on the network's degree &  \\ \hline

\textsc{Coloring-plus-Alg} &  LIS, SIS & Full topology & $O(1/ \Delta^H)$   \\
 & NFS, FTG &  &   \\ \hline 
\end{tabular}
\caption{Summary of the paper's results.}
\label{tb:results}
\end{table*}


First, we give a new family of combinatorial structures, which we call \emph{universally strong selectors}, that are used to provide a set of transmission schedules. Making use of these structures, combined with some known queuing policies such as Longest In System (LIS), Shortest In System (SIS), Nearest From Source (NFS) and Furthest To Go (FTG), we  propose a {\em local-knowledge} packet-oblivious routing algorithm (i.e., which is working without using any global topological information) that guarantees stability for certain injection rates.  We show that this protocol is asymptotically optimal regarding the injection rate for which stability is guaranteed, mainly, for $\Theta(1/\Delta^H)$, where $\Delta^H$ is the maximum vertex degree of the conflict graph of the wireless network.

Later, we introduce a {\em packet-oblivious} routing algorithm that, by using the same queueing policies, guarantees stability for higher injection rates. This algorithm is optimal regarding the injection rate for which stability is guaranteed. However, it needs to use some global information of the system topology (so called {\em global-knowledge}). 

Table~\ref{tb:results} summarizes the main results of this paper.

The rest of the paper is structured as follows. Section~2 contains the technical preliminaries. In Section~3, we introduce and study universally strong selectors, which are the core components of the deterministic local-knowledge routing algorithm that is developed in Section~4, where we show that it is asymptotically optimal. In Section~5, we present a global-knowledge routing algorithm that guarantees universal stability for higher traffic,  we also show that it is  optimal. 
In  Section~6, we extend the results obtained for the Longest-In-System scheduling policy in Section 4 to other policies,
mainly, SIS, NFS and FTG. This extension is based on different technical tools, mainly, on reduction to the wired model with failures studied in~\cite{AlvarezBDSF05}, in which SIS, NFS and FTG are stable. 
We conclude with future directions in Section~7.  Some technical details and proofs are deferred to the Appendix.

\section{Model and Problem Definition}

\paragraph*{Wireless radio network}

We consider a {\em wireless radio network} represented by a directed symmetric network graph $G=(V_G, E_G)$. 
It consists of {\em nodes} in $V_G$ representing devices, and directed edges,
called {\em links}, representing the fact that a transmission
from the starting node of the link could be directly delivered to the ending node. 
The graph is symmetric in the sense that if some $(i,j) \in E_G$  then $(j,i) \in E_G$ too. 
Each node has a unique ID number and it  knows some upper bounds on the number $m$ of edges in the network and the network in-degree 
(i.e., the largest number of links incoming to a network node).\footnote{In which case the performance will depend on these known estimates, instead of the actual values.}

Nodes communicate via the underlying wireless network $G$.
Communication is in synchronous rounds.
In each round a node could be either transmitting or listening. 
Node $i$ receives a message from a node $j\ne i$ in a round if $j$ is the only transmitting in-neighbor of $i$ in this round and node $i$ does not transmit in this round;
we say that the message was successfully sent/transmitted from $j$ to $i$.

\paragraph*{Conflict graphs}

We define the {\em conflict graph} $H(G)=(V_{H(G)}, E_{H(G)})$ of a network $G$ as follows: (1) its vertices are 
links of the network (i.e., $V_{H(G)} = E_G$) and, (2) a directed edge $(u,v) \in E_{H(G)}$ if and only if a message across link $v \in E_G$ cannot be successfully
transmitted while link $u \in E_G$ transmits.
Note that, accordingly with the radio model, a conflict occurs if and only if the transmitter in $u$ is also a receiver in $v$ {\em or} the transmitter in $u$ is a neighbor of the receiver in $v$ (see Figure~\ref{fig:networks} for an illustrative example).
If network $G$ is clear from the context, we skip the parameter $G$ in $H(G)$ (i.e., we will use  $H$). 
Note that, the links in our definition are directed in order 
to distinguish which transmission is blocked by which.

\begin{figure}[t]
\centering
\includegraphics[scale=1]{./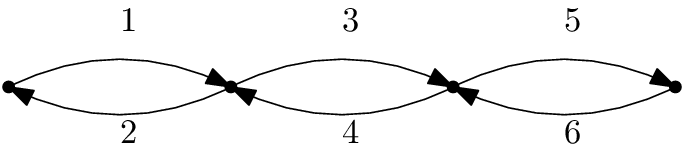} 

\vspace*{0.75cm}

\includegraphics[scale=1]{./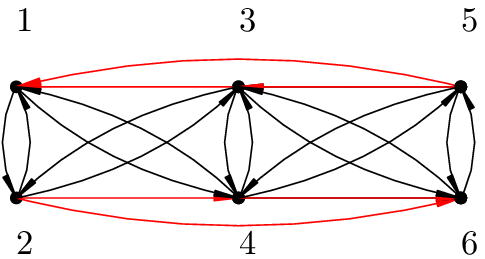}
\caption{Radio network $G$ with $4$ nodes and links labeled $1-6$ (up). Conflict graph $H(G)$ obtained from network $G$ (down). Observe that each link $i$ in network $G$ corresponds to one node $i$ in $H(G)$.}
\label{fig:networks}
\end{figure}

\paragraph*{Routing protocols and transmission schedules}

We consider {\em packet-oblivious} routing protocols,
that is, protocols which only use their hardwired memory
and basic parameters of the stored packets assigned to them at injection time (such as
source, destination, injection time, route) in order to decide
which packet to send and when.

We distinguish between {\em global-knowledge} protocols
which can use topological information given as input, and
{\em local-knowledge} protocols that are given only
basic system parameters such as the number of links or the network's in-degree.

All our protocols will be based on pre-defined transmission
schedules, which will be circularly repeated --- the properties
of these schedules will guarantee stability for certain
injection rates. These schedules will be different for different types of protocols, due to the available information based on which these schedules could be created.

\paragraph*{Adversaries}
\label{sec:adv}

We model dynamic injection of packets by way of an adversarial model, in the spirit of similar approaches used in~\cite{BorodinKRSW01,AndrewsAFLLK01,LotkerPR04,ChlebusKR-TALG12,CholviK10,ChlebusCK-FOMC14,7950907}. An adversary represents the users that generate packets to be routed in a given radio network. The constraints imposed on packet generation by the adversary allow considering worst-case performance of deterministic routing algorithms handling dynamic traffic.

Over time, an adversary injects packets to some nodes. The adversary decides on a path a packet has to traverse upon its injection. Our task is to develop a packet-oblivious routing protocol such that the network remains \emph{stable}; that is, the number of packets simultaneously queued is bounded by a constant in all rounds.
Since an unbounded adversary can exceed the capability of a network to transmit messages, we limit its power in the following way:
for any time window of any length $T$, the adversary can inject packets (with their paths) in such a way that each link is traversed by at most $\rho \cdot T + b$ packets, for some $0\leq \rho \leq 1$ and $b \in \mathbb{N}^+$. We call such an adversary a \emph{$(\rho, b)$-adversary}.

\section{Selectors as transmission schedulers}
\label{sec:selectors}

In this section, we introduce a family of combinatorial structures, widely called \emph{selectors}{~\cite{clementi2003,ChlebusKPR-ICALP11}}, that are the core of the deterministic routing algorithm presented in Section~\ref{sec:local_routing}. In short, we will use specific type of selectors to provide a set of transmission schedules that assure stability when combined with suitable queuing policies.

There are many different types of selectors, with the more general one being described below:

\begin{definition}
\label{def:selector}
Given integers $k, m$ and $n$, with $1 \leq m \leq k \leq n$, we say that a boolean matrix $M$ with $t$ rows and $n$ columns is a \emph{$(n, k, m)$-selector} if any submatrix of $M$ obtained by choosing $k$ out of $n$ arbitrary columns of $M$ contains at least $m$ distinct rows of the identity matrix $I_k$.  The integer $t$ is referred as the size of the $(n, k, m)$-selector.
\end{definition}

In order to use selectors as transmission schedules, the parameter $n$ is intended to refer to the number of nodes in the network, $k$ refers to the maximum number of nodes that can compete to transmit (i.e., $k=\Delta+1$, where $\Delta$ is the maximum degree of the network), and $m$ refers to the number of nodes that are guaranteed to successfully transmit during the $t$-round schedule. Therefore, each column of the matrix $M$ is used to define the whole transmission schedule of each node. Rows are used to decide which nodes should transmit at each time slot: In the $i$-th time slot, node $v$ will transmit iff $M_{i,v} = 1$ (and $v$ has a packet queued);  the schedule is repeated after each $t$ time slots.

Taking into account the above-mentioned approach, selectors may be used to guarantee that during the schedule, every node will successfully receive some messages. 

A $(n,k,1)$-selector guarantees that, for each node, one of its neighbors will successfully transmit during at least $1$ round per schedule cycle (that is, that node will successfully receive at least one message). However, whereas the above use of selectors is helpful in broadcasting (since there is progress every time any node receives a message from a neighbor), it happens that many neighbors may have something to send, but only one of them has something for that node. Therefore, the above presented selector guarantees that each node will receive at least one message, but not necessarily will receive the one addressed to it.

A $(n,k,k)$-selector (which is known as \emph{strong selector}~\cite{clementi2003}) guarantees that every node that has exactly $k$ neighbors will receive a message from each one of them. However, it has been shown that its size $t=\Omega(\min\{n, (k^2 / \log k) \log n\})$. This means that $k$ packets will be received, but during a long amount of time.

In order to solve the above mentioned problems with known selectors, now we introduce a new type of selectors, which we call \emph{universally strong}. Namely, a \emph{$(n,k,\epsilon)$-universally-strong} selector of length $t$ guarantees that every node will receive $\epsilon \cdot t /k$ successful messages from every neighbor during $t$ rounds. More formally:


\begin{definition}
A \emph{$(n,k,\epsilon)$-universally-strong selector} $\mathcal{S}$ is a family of $t$ sets $T_1,\dots,T_t\subseteq[n]$ such that for every set $A\subseteq[n]$ of at most $k$ elements and for every element $a \in A$ there exist at least $\epsilon \cdot t/k$ sets $T_i \in \mathcal{S}$ such that $T_i \cap A = \{a\}$.
\end{definition}

\subsection{Universally strong selectors of polynomial size}

Clearly, universally strong selectors make sense provided they exist and their size is moderate. In the next theorem, we prove that, for any $\epsilon \leq 1/e$, there exists a $(n,k,\epsilon)$-universally-strong selector of polynomial size.

\begin{theorem}
\label{theorem:selector_existence}
For any $\epsilon \leq 1/e$, there exists a $(n,k,\epsilon)$-universally-strong selector of size $O(k^2 \ln n)$.
\end{theorem}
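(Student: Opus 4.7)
The plan is to apply the probabilistic method. I would construct a random family $\mathcal{S} = \{T_1, \ldots, T_t\}$ of $t = C k^2 \ln n$ sets by placing each element $j \in [n]$ into each $T_i$ independently with probability $p = 1/k$, and show that with strictly positive probability the resulting family satisfies the selector condition for every pair $(A,a)$ with $A \subseteq [n]$, $|A| \leq k$, and $a \in A$.

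First I would analyze a single fixed pair $(A,a)$ with $|A| = s \leq k$. A single-round computation gives
\[
\Pr[T_i \cap A = \{a\}] \;=\; p(1-p)^{s-1} \;\geq\; \frac{1}{k}\left(1 - \frac{1}{k}\right)^{k-1} \;\geq\; \frac{1}{ek},
\]
using the standard inequality $(1 - 1/k)^{k-1} \geq 1/e$. Letting $X_{A,a}$ count the indices $i$ with $T_i \cap A = \{a\}$, the variable $X_{A,a}$ is a sum of $t$ independent Bernoulli trials and $\mathbb{E}[X_{A,a}] \geq t/(ek)$. Because $\epsilon \leq 1/e$, the target value $\epsilon t/k$ is dominated by this expectation, and writing it as $(1-\delta)\cdot t/(ek)$ with $\delta = 1 - \epsilon e \geq 0$ puts me in position to apply the multiplicative Chernoff bound.

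Next I would invoke Chernoff to obtain
\[
\Pr\!\left[X_{A,a} < \epsilon t/k\right] \;\leq\; \exp\!\left(-\delta^{2}\,t/(2ek)\right),
\]
and then union-bound over all violating pairs. The number of pairs $(A,a)$ with $|A|\leq k$ and $a\in A$ is at most $\sum_{s=1}^{k} s\binom{n}{s} \leq k\cdot n^{k}$. Choosing $C$ large enough that $C\,\delta^{2}/(2e) > 1$ makes the overall failure probability at most $k\cdot n^{k}\cdot n^{-C\delta^{2}k/(2e)} < 1$, so a deterministic instance of the construction exists with $t = C k^{2}\ln n$ rows. The hidden constant degrades like $1/\delta^{2} = 1/(1-\epsilon e)^{2}$ as $\epsilon \uparrow 1/e$.

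The main obstacle I expect is the boundary behaviour at $\epsilon = 1/e$: there the Chernoff slack $\delta$ vanishes and the above argument gives nothing, so I would either read the hypothesis as $\epsilon < 1/e$ (with the $O$-constant depending on $\epsilon$), or replace Chernoff with a tighter concentration argument to absorb the degeneracy. A secondary point to verify is that the sampling probability $p = 1/k$ is genuinely optimal: maximising $p(1-p)^{s-1}$ in $p$ gives $p = 1/s$, and the worst case $s = k$ forces $p = 1/k$, so no other choice of $p$ improves the bound.
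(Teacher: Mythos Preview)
Your approach is essentially identical to the paper's: random sets with inclusion probability $1/k$, a Chernoff bound on the count of ``selecting'' rows, and a union bound over at most $k\binom{n}{k}$ bad events (the paper restricts to $|A|=k$ and notes the smaller-$A$ case follows monotonically). The obstacle you flag at $\epsilon = 1/e$ is an artifact of your loose estimate $(1-1/k)^{k-1} \geq 1/e$; the paper instead retains the exact value $c = (1-1/k)^{k-1}$, observes that $c > 1/e$ \emph{strictly} for every finite $k\ge 2$, and takes $\delta = 1 - \epsilon/c > 0$, so the Chernoff slack never vanishes and no separate argument is needed.
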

\begin{proof}
The proof relies on the probabilistic method.

Consider a random matrix $M$ with $t$ rows and $n$ columns, where $M_{i,j} = 1$ with probability $p$ and $M_{i,j} = 0$ otherwise. Given a row $i$ and columns $j_1,\dots, j_k$, the probability that $M_{i, j_1}=1$ and $M_{i,j_2} = \dots = M_{i,j_k} = 0$ (i.e., that node $j_1$'s transmission is not interrupted by  nodes $j_2,\dots,j_k$ in round $i$) is $P = p(1-p)^{k-1}$ and is maximized with $p=1/k$. In further considerations, we use matrix $M$ generated with $p=1/k$.

Given columns $C= \{j_1,\dots, j_k\}$, let $X(C)$ be the number of ``good'' rows $i$ such that $M_{i, j_1}=1$ and $M_{i,j_2} = \dots = M_{i,j_k} = 0$.

We will use the following Chernoff bound: 
$$ Pr[X(C) \leq (1-\delta)E[X(C)]] \leq exp(-E[X(C)]\delta^2/2)$$
 for  $0 \leq \delta \leq 1. $ 

Using $E[X(C)]=Pt$ and $\delta = (kP-\epsilon)/(kP)$, we obtain:
$$ Pr[X(C) \leq \epsilon t / k] \leq exp(-Pt\delta^2/2). $$

Consider a ``bad'' event $\mathcal{E}$ such that for at least one set of columns of size at most $k$, there are few good rows. More specifically, $X(C) \leq \epsilon t/k$ for at least one set of columns $C$, where $|C| = k$. The probability $R$ of event $\mathcal{E}$ happening fulfills the following inequality:
$$ R \leq  k\binom{n}{k} exp(-Pt\delta^2/2). $$

Therefore $R < 1$ if 
$$ exp(-Pt\delta^2/2) < 1/ \left[k\binom{n}{k} \right] $$
$$ -Pt\delta^2/2 < -\ln \left(k\binom{n}{k}\right) $$
$$ Pt\left(\dfrac{kP-\epsilon}{kP}\right)^2/2 > \ln \left(k\binom{n}{k}\right) $$

Let $c=kP$. Using $\binom{n}{k} \leq \left(\dfrac{ne}{k}\right)^k$, provided $c \neq \epsilon$, we obtain the following:
$$ t(c-\epsilon)^2/(2ck) > \ln k + \ln \left(\dfrac{ne}{k}\right)^k $$
$$ t > \left[2ck \ln k + 2ck^2 \ln \left(\dfrac{ne}{k}\right)\right] / (c-\epsilon)^2$$

Therefore, as long as $0 \leq \delta = \frac{c-\epsilon}{c} \leq 1$ (so that we can use the Chernoff bound) and $\epsilon \neq c$, the probability of generating a random matrix $M$ such that event $\mathcal{E}$ occurs is less than $1$. Thus, there exists a matrix $M$ such that, for every set of $k$ columns $j_1,\dots, j_k$, there are at least $\epsilon t/k$ rows such that $M_{i, j_1}=1$ and $M_{i,j_2} = \dots = M_{i,j_k} = 0$. Trivially, such matrix $M$ guarantees the above property for any set of at most $k$ columns. Hence, $M$ represents a 
$(n,k,\epsilon)$-universally-strong selector, provided that $\epsilon < c = kP$. Next, we calculate which values of $\epsilon$ fulfill that inequality.

Consider a sequence $a_i = (1+1/i)^i$. $a_i$ is known as a lower bound on the Euler's number $e$ (i.e., $\forall i \; a_i < e$). Note that $c = kP = (1-1/k)^{k-1} = 1/a_{k-1} > 1/e$ for all $k \geq 2$. This implies that any $\epsilon \leq 1/e$ fulfills the requirement of $\delta > 0$ and results in the existence of a $(n,k,\epsilon)$-universally-strong selector.
\end{proof}

\subsection{Obtaining  universally strong selectors of polynomial size in polynomial time}
\label{subsec:poly}

In the proof of Theorem~\ref{theorem:selector_existence}, we have introduced a family of universally strong selectors of polynomial size. However, obtaining them by derandomizing would be very inefficient  (all the approaches we know are, at least, exponential in $n$). Here, we present an algorithm, which we call \textsc{Poly-Universally-Strong}, that computes universally strong selectors of polynomial size in polynomial time (they only have slightly lower values of $\epsilon$ comparing to the existential result in Theorem~\ref{theorem:selector_existence}).


The algorithm, whose code is shown in Figure~\ref{fig:selector}, has to be executed by each node in the network taking the same polynomials, so that all nodes will obtain exactly the same matrix that defines the transmission schedule.


\begin{figure}[t]
\begin{mdframed}
\begin{center}
\begin{enumerate}
\item
Let $d= \left \lceil \log_k n \right \rceil$ and $q=c \cdot k \cdot d$ for some constant $c>0$ such that $q^{d+1} \ge n$.

\item
Consider all polynomials $P_i$ of degree $d$ over field $[q]$. Notice that there are $q^{d+1}$ of such polynomials. 

\item
Create a matrix $M'$ of size $q \times q^{d+1}$.
Each column will represent values $P_i(x)$ of each polynomial $P_i$ for arguments $x = 0, 1, \dots, q-1$
(corresponding to rows of $M'$). 
Next, matrix $M''$ is created from $M'$ as follows: each value $y=P_i(x)$ is represented 
and padded in $q$ consecutive rows of $0$s and $1$s, where $1$ is on $y$-th position, while on all other positions there are $0$s. Notice that each column of $M''$ has $q^2$ rows ($q$ rows for each argument),
thus $M''$ has size $q^2\times q^{d+1}$. 

\item
Remove $q^{d+1}-n$ arbitrary columns from matrix $M''$, creating matrix $M$ with exactly $n$ columns remaining. 
\label{step:remove}

\item
Each row of matrix $M$ will correspond to one set $T_i$ of a universally strong selector $\{T_i\}_{i=1}^{q^2}$
over the set $\{1,\ldots,n\}$ of elements.

\end{enumerate}

\caption{The \textsc{Poly-Universally-Strong} algorithm, given parameters $n$ and $k$.}
\label{fig:selector}
\end{center}
\end{mdframed}
\end{figure}


The next theorem shows that, indeed, it constructs a $(n,k,\epsilon)$-universally-strong selector of polynomial size with $\epsilon= 1/(4 \log_k n)$. 

\begin{theorem}
\label{th:poly}
\textsc{Poly-Universally-Strong}\ constructs (by using $c=2$) a $(n,k,\epsilon)$-universally-strong selector of size $4 \cdot k^2 \cdot \left\lceil\log_k n \right\rceil^2$ with $\epsilon = 1/(4 \log_k n)$. 
\end{theorem}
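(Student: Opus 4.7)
}

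My plan is to read the construction as a Reed--Solomon-style incidence matrix and bound the number of ``selecting'' rows by a polynomial root-count. Index the columns of $M''$ by polynomials $P_i$ of degree $\le d$ over $[q]$, and index its rows by pairs $(x,y)\in[q]\times[q]$, where $(M'')_{(x,y),i}=1$ iff $P_i(x)=y$. Then row $(x,y)$ corresponds to the set $T_{(x,y)}=\{i : P_i(x)=y\}$, and restricting to the $n$ columns kept in Step~\ref{step:remove} gives the desired family $\{T_{(x,y)}\}$ on ground set $[n]$. Before the counting step I would verify the easy bookkeeping: with $d=\lceil\log_k n\rceil$ and $q=2kd$, the selector has $t=q^2=4k^2\lceil\log_k n\rceil^2$ sets, and $q^{d+1}\ge k^{d+1}\ge n$ so there are enough polynomials to supply $n$ distinct columns.

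Next I would prove the universal-strong property by a clean algebraic argument. Fix $A\subseteq[n]$ with $|A|\le k$ and some $a\in A$, with $P_a$ the polynomial attached to $a$ and $P_b$ the polynomial attached to each other $b\in A\setminus\{a\}$. A row $(x,y)$ satisfies $T_{(x,y)}\cap A=\{a\}$ exactly when $P_a(x)=y$ and $P_b(x)\ne y$ for every $b\in A\setminus\{a\}$. The first equation pins $y=P_a(x)$ uniquely, so good rows are in bijection with $x\in[q]$ avoiding every root of $P_b-P_a$. Each difference $P_b-P_a$ is a nonzero polynomial of degree at most $d$ and hence has at most $d$ roots in the field, so by a union bound the bad $x$'s number at most $(|A|-1)\,d\le (k-1)d$.

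Consequently the number of good rows is at least $q-(k-1)d=2kd-(k-1)d=(k+1)d\ge kd$. Since $\epsilon t/k=\tfrac{1}{4\lceil\log_k n\rceil}\cdot\tfrac{4k^2 d^2}{k}=kd$, this is exactly the threshold demanded by the definition of a $(n,k,\epsilon)$-universally-strong selector, which concludes the proof. The main non-routine point is calibrating the constant $c=2$ in $q=ckd$: one wants $q$ big enough that the field has room for $(k-1)d$ ``collisions per partner'' while still leaving at least $kd$ good evaluation points, and $c=2$ is exactly what makes $q-(k-1)d\ge kd$; anything smaller would force a weaker $\epsilon$, while $q^{d+1}\ge n$ still has to be preserved so that Step~\ref{step:remove} is well-defined. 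Everything else is routine polynomial identity counting over a field.
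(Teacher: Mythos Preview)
Your proposal is correct and follows essentially the same argument as the paper: interpret the rows as pairs $(x,y)$, reduce the selection condition to avoiding roots of the difference polynomials $P_b-P_a$, and use the degree bound to cap the number of bad evaluation points. The only cosmetic differences are that you (correctly) count $|A|-1\le k-1$ competitors rather than the paper's looser $k$, obtaining $(k+1)d$ good rows instead of $kd$, and that you verify $c=2$ works directly rather than deriving it via the calculus optimization the paper performs on $f(c)=(c-1)/(c^2kd)$; neither changes the substance of the proof.
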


\begin{proof}
First, note that two polynomials $P_i$ and $P_j$ of degree $d$ with $i \neq j$, can have equal values for at most $d$ different arguments. This is because they have equal values for arguments $x$ for which $P_i(x) - P_j(x)=0$. However, $P_i - P_j$ is a polynomial of degree at most $d$, so it can have at most $d$ zeroes. So, $P_i(x) = P_j(x)$ for at most $d$ different arguments~$x$.

Take any polynomial $P_i$ and any $k$ polynomials $P_j$ still represented in $M$ (so excluding columns/polynomials removed from consideration in step~{\ref{step:remove}}). There are at most $k \cdot d$ different arguments where one of the $k$ polynomials can be equal to $P_i$. So, for $q-k \cdot d$ different arguments, the values of the polynomial $P_i$ are unique. Therefore, if we look at rows with $1$ in column $i$ of matrix $M$ (there are $q$ of those rows, one for each argument), at least $q-k \cdot d$ of them have $0$s in chosen $k$ columns. Since there are $q^2$ rows, so a fraction $(q-k \cdot d)/q^2$ of rows have the desired property (i.e., there is value $1$ in column $i$ and value $0$ in the chosen $k$ columns):

$$\dfrac{q-k \cdot d}{q^2} = \dfrac{(c-1) \cdot k \cdot d}{(c \cdot k \cdot d)^2} = \dfrac{c-1}{c^2 \cdot k \cdot d} \triangleq f(c) \ . $$

Let us find the value of $c$ that maximizes the function $f$. To do it, we compute its differential 

$$f'(c) = (\dfrac{c-1}{c^2 \cdot k \cdot d})' = \dfrac{1\cdot(c^2 \cdot k \cdot d)-(c-1)\cdot k \cdot d \cdot 2c}{c^4 \cdot k^2 \cdot d^2} =$$ 
$$= \dfrac{-c^2 \cdot k \cdot d + 2c \cdot k \cdot d}{c^4 \cdot k^2 \cdot d^2} = \dfrac{-c + 2}{c^3 \cdot k \cdot d}.$$ 

Thus, $f'(c)=0$ for $c=0$ or $c=2$. The value $c=2$ maximizes $f$, giving 
$f(c)\le f(2) = 1/(4k \cdot d) = 1/(4k \cdot \log_k n)$.

Therefore, we can construct a $(n,k,\epsilon)$-universally-strong selector with $\epsilon = f(2) \cdot k = 1/(4d) = 1/(4 \log_k n)$ of length $4k^2 \cdot \left\lceil\log_k n \right\rceil^2$ (which means that an $f(2) = 1/(4k \cdot \log_k n)$ fraction of the selector's sets have the desired property).
\end{proof}


\section{A local-knowledge routing algorithm}
\label{sec:local_routing}

In this section, we introduce a local-knowledge packet-oblivious routing algorithm that makes use of the family of  universally strong selectors introduced in Section~\ref{sec:selectors} as \emph{transmission schedules} (i.e., the time instants when packets stored at each one node must be transmitted to a receiving node). As it has been mentioned previously, local-knowledge routing algorithms work without using any topological information, except for maybe some network's features that do not require full knowledge of its topology. In our particular case, that will consists of some upper bounds on the number of links and on the network's degree.

The code of the proposed algorithm, which we call \textsc{USS-plus-Alg}, is shown in Figure~\ref{fig:routing}. Given a graph $G$ with a number of links bounded by $m$, and an in-degree of its conflict graph $H$  (which we denote as $\Delta_{in}^H$) bounded by $\Delta \geq 1$, it uses a $(m,\Delta+1,\epsilon)$-universally-strong selector as a schedule: assuming the selector is represented by matrix $M$ with $t$ rows, each link $z \in E_G$ will transmit at time $i$ iff $M_{i \mod t, z}=1$. Notice that here each link is assumed to have an independent queue, and therefore they will act as a sort of ``nodes'' (in terms of selectors, such as it has been stated in the previous section). This means that each individual link will have its own schedule.

\begin{figure}[t]
\begin{mdframed}
\begin{center}
\begin{enumerate}
\item
Choose $m$ and $\Delta$  such that $\mid E(G) \mid  \; \leq m$ and $\Delta_{in}^H \leq \Delta$.

\item
Obtain a $(m,\Delta + 1,\epsilon)$ universally strong selector (for some value of $\epsilon$) of some length $t$ and use it as the transmission schedule.

\item
When there are several packets awaiting in a single queue, choose the packet to be transmitted according to \textsc{Alg}, breaking ties in any arbitrary fashion.

\end{enumerate}
\caption{The \textsc{USS-plus-Alg} algorithm for a network $G$.}
\label{fig:routing}
\end{center}
\end{mdframed}
\end{figure}

\subsection{The \textsc{USS-plus-LIS} algorithm}

Next, we show that  the \textsc{USS-plus-LIS} algorithm (i.e., the  \textsc{USS-plus-Alg} algorithm where \textsc{Alg} is the Longest-In-System scheduling policy), guarantees stability, provided a given packets' injection admissibility condition is fulfilled. But first, we define what is a $(\rho',T)$-frequent schedule.

\begin{definition}
\emph{A $(\rho,T)$-frequent schedule} for graph $G$ is an algorithm that decides which links of graph $G$ transmit at every round in such a way that each link is guaranteed to successfuly (without radio network collisions) transmit at least $\rho \cdot T$ times in any window of length $T$ (provided at least $\rho \cdot T$ packets await for transmission at the link at the start of the window).
\end{definition}

At this point, we note that the transmission schedules provided by our universally strong selectors can be seen as \emph{$(\rho,T)$-frequent schedules}. \\

We now proceed with the main result in this section.

\begin{theorem}
\label{the:local}
Given a network $G$, the \textsc{USS-plus-LIS} algorithm is stable against any $(\rho,b)$-adversary, for $\rho < \frac{\epsilon}{\Delta+1}$.
\end{theorem}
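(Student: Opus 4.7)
The plan is to show stability in two conceptual steps: first, exploit the universally strong selector to guarantee each link a constant fraction of collision-free transmission opportunities; then reduce the behavior of \textsc{USS-plus-LIS} to a wireline adversarial queueing instance in which LIS is known to be universally stable.

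To carry out the first step, I would fix any link $\ell\in E_G$ and let $N^-(\ell)\subseteq E_G$ be its set of in-neighbors in the conflict graph $H$. Since $\Delta_{in}^H\leq\Delta$, the set $A=\{\ell\}\cup N^-(\ell)$ has size at most $\Delta+1=k$. Invoking the defining property of the $(m,k,\epsilon)$-universally-strong selector (whose schedule has length $t$) with distinguished element $a=\ell$, at least $\epsilon t/(\Delta+1)$ rows $T_i$ satisfy $T_i\cap A=\{\ell\}$. In every such round, link $\ell$ is the only member of $A$ that transmits, and by the definition of $H$ the LIS-selected packet at $\ell$ is delivered successfully. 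Repeating the schedule cyclically then shows that any window of length $T$ contains at least $\rho'T-O(t)$ guaranteed successful slots for $\ell$, where $\rho'=\epsilon/(\Delta+1)$; in particular, the schedule is $(\rho',T)$-frequent in the sense of the definition above.

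Next I would couple the radio execution with a wireline adversarial queueing instance with \emph{link failures}. Call link $\ell$ \emph{available} in round $i$ if $M_{i \bmod t,\ell}=1$ and $T_{i\bmod t}\cap N^-(\ell)=\emptyset$, and \emph{unavailable} otherwise. In the associated wireline model, each link carries one packet per available round and nothing per unavailable round, the injections come from the same $(\rho,b)$-adversary, and queues are served by LIS. Because every successful slot in the radio execution occurs in an available round of the wireline model, the wireline queues pointwise upper-bound the radio queues. Since available rounds occur at rate at least $\rho'$ per link and $\rho<\rho'$, the wireline instance lies strictly inside the stable regime: LIS universal stability under bounded link capacity, obtained either by invoking the wireline-with-failures framework of~\cite{AlvarezBDSF05} or by adapting the classical potential/weight argument of Andrews et al.\ to a link whose service rate is $\rho'$ rather than $1$, gives a uniform bound on the wireline backlog, and hence on the radio backlog.

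The main obstacle I anticipate is the passage from the selector's \emph{per-period} guarantee to the \emph{per-window} statement required by both the adversary model and the AQT stability arguments, since the adversary operates on arbitrary sliding windows of length $T$ while the USS guarantee is stated over one cycle of length $t$. I would absorb this mismatch into the constants: any window of length $T$ intersects at least $\lfloor T/t\rfloor$ complete cycles and therefore contains at least $\lfloor T/t\rfloor\cdot\epsilon t/(\Delta+1)\geq\rho'T-O(t)$ guaranteed successful slots, so the $O(t)$ boundary term is indistinguishable from an additional burstiness constant of the form $b$. Because $\rho<\rho'$ leaves a strictly positive slack $\rho'-\rho$, this extra burstiness is swallowed by the standard LIS stability argument, yielding a uniform bound on the total backlog that depends on $m$, $\Delta$, $t$, $b$ and the gap $\rho'-\rho$ but is independent of time.
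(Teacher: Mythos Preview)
Your proposal is correct and its first step is identical to the paper's: fix a link, apply the selector property to the link together with its in-neighbors in $H$, and conclude that the schedule is $(\epsilon/(\Delta+1),t)$-frequent.

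The second step differs in emphasis. The paper packages the passage from ``$(\rho',T)$-frequent schedule plus LIS'' to ``stable against $(\rho,b)$-adversaries for $\rho<\rho'$'' into a standalone auxiliary lemma (Lemma~\ref{lem:LIS}), whose proof is a direct adaptation of the Andrews et~al.\ window/class argument to links that serve only $\rho'T$ packets per window rather than $T$. You instead lead with a reduction to the wireline failure model of~\cite{AlvarezBDSF05}, treating unavailable rounds as failures. Interestingly, this is precisely the route the paper takes \emph{later} (Lemma~\ref{lem:reduction}) to handle SIS, NFS and FTG, but not LIS; note that the paper's citation of~\cite{AlvarezBDSF05} lists only those three policies as stable in the failure model, so if you rely on that reference you would need to check separately that LIS is covered there. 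Your fallback to the Andrews et~al.\ adaptation is exactly what the paper does, so either way you are safe.

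One small simplification: your concern about ``per-period versus arbitrary sliding window'' is unnecessary here. The paper simply takes $T=t$ (the selector length) as the window parameter in the definition of a $(\rho',T)$-frequent schedule, and Lemma~\ref{lem:LIS} is proved entirely in units of these fixed windows; no sliding-window statement is needed and no $O(t)$ boundary term has to be absorbed into the burstiness.
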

\begin{proof}
Let us take any arbitrary link $z\in E_G$ and consider the set of all other links that conflict with link $z$, of which there are at most $\Delta$. This means that there exist at least $\epsilon \cdot  t/(\Delta+1)$ rows $i$ in $M$ such that $M_{i \mod t ,z}=1$ and $M_{i \mod t,c_1} = \dots = M_{i \mod t,c_j} = 0$. Therefore, at time $i$, link $z$ will transmit a message, and no link that conflicts with the link $z$ will transmit. This guarantees that each link will successfully transmit, at least, $\epsilon \cdot t/(\Delta+1)$ messages during any schedule of length $t$ (i.e., we obtained a $(\epsilon/(\Delta+1),t)$-frequent schedule $\mathcal S$). Then, we can apply the result Lemma~\ref{lem:LIS}  in the Appendix~\ref{app:auxiliary} to deduce that such an algorithm is stable against any $(\rho,b)$-adversary, where $\rho < \frac{\epsilon}{\Delta+1}$. 
\end{proof}

By using the selectors provided by the  \textsc{Poly-Universally-Strong}\ algorithm in \textsc{USS-plus-LIS}, we have the following result:

\begin{corollary}
Given a network $G$, the \textsc{USS-plus-LIS} algorithm using a universally strong selector computed by the \textsc{Poly-Universally-Strong} algorithm is a stable algorithm against any $(\rho,b)$-adversary, for $\rho < \frac{1}{4(\Delta+1)\log_{\Delta+1}m }$.
\end{corollary}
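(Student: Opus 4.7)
The plan is to derive the corollary as a direct specialization of Theorem~\ref{the:local} once the parameters of the universally strong selector produced by \textsc{Poly-Universally-Strong}\ are substituted in. Theorem~\ref{the:local} establishes stability of \textsc{USS-plus-LIS} whenever the injection rate satisfies $\rho < \frac{\epsilon}{\Delta+1}$, where $\epsilon$ is the parameter of the underlying $(m,\Delta+1,\epsilon)$-universally-strong selector used as the transmission schedule. So the corollary reduces entirely to identifying what value of $\epsilon$ is delivered by \textsc{Poly-Universally-Strong}\ when it is invoked with the parameters required by \textsc{USS-plus-Alg}.

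First I would recall that in the \textsc{USS-plus-Alg} framework, the selector acts on the set of links of $G$, of which there are at most $m$, and must handle conflict sets of size up to $\Delta+1$. Hence, to instantiate the algorithm we invoke \textsc{Poly-Universally-Strong}\ with $n := m$ and $k := \Delta + 1$. By Theorem~\ref{th:poly}, the resulting construction is a $(m, \Delta+1, \epsilon)$-universally-strong selector with
\[
\epsilon \;=\; \frac{1}{4 \log_{\Delta+1} m},
\]
of size $4(\Delta+1)^2 \lceil \log_{\Delta+1} m \rceil^2$, which is polynomial in $m$ and hence acceptable as a transmission schedule.

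Next, I would simply plug this value of $\epsilon$ into the stability bound of Theorem~\ref{the:local}. That yields stability of \textsc{USS-plus-LIS} against any $(\rho, b)$-adversary whose injection rate satisfies
\[
\rho \;<\; \frac{\epsilon}{\Delta+1} \;=\; \frac{1}{4(\Delta+1)\log_{\Delta+1} m},
\]
which is exactly the claim. No new technical machinery is required, since all the heavy lifting — the existence and efficient construction of the selector, and the LIS-based stability argument — was already carried out in Theorems~\ref{th:poly} and~\ref{the:local}.

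There is really no main obstacle here; the only minor thing to double-check is that the constraint $\epsilon \leq 1/e$ used in Theorem~\ref{theorem:selector_existence} is not needed for the polynomial-time construction, since Theorem~\ref{th:poly} gives its own (smaller) value of $\epsilon$ unconditionally for the parameter regime of interest, namely whenever $\log_{\Delta+1} m \geq 1$. Thus the corollary follows by direct substitution.
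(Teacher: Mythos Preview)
Your proposal is correct and follows exactly the approach the paper intends: the corollary is stated without an explicit proof, as it follows immediately by plugging the value $\epsilon = 1/(4\log_{\Delta+1} m)$ from Theorem~\ref{th:poly} (with $n:=m$, $k:=\Delta+1$) into the bound $\rho < \epsilon/(\Delta+1)$ of Theorem~\ref{the:local}. There is nothing to add.
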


If instead of the selectors provided by the \textsc{Poly-Universally-Strong}\ algorithm, we use a selector from Theorem~\ref{theorem:selector_existence}, we have that:

\begin{corollary}
\label{corollary:best_selector}
Given a network $G$, there exists a universally strong selector that, used in the \textsc{USS-plus-LIS} algorithm, provides a stable algorithm against any $(\rho,b)$-adversary, for $\rho < \frac{1}{e \cdot (\Delta+1)}$.
\end{corollary}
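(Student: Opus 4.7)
The plan is to combine, in one step, the existential result of Theorem~\ref{theorem:selector_existence} with the stability guarantee of Theorem~\ref{the:local}: the corollary is essentially an instantiation of the latter at the largest value of $\epsilon$ permitted by the former.

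Concretely, I would proceed as follows. Let $m$ be an upper bound on the number of links of $G$ and let $\Delta$ be an upper bound on the in-degree $\Delta_{in}^H$ of the conflict graph $H(G)$, so that the parameters required by \textsc{USS-plus-LIS} are well defined. I would then invoke Theorem~\ref{theorem:selector_existence} with $n = m$, $k = \Delta+1$, and $\epsilon = 1/e$; since $1/e \leq 1/e$ meets the hypothesis of that theorem, we obtain the existence of an $(m, \Delta+1, 1/e)$-universally-strong selector (of size $O((\Delta+1)^2 \ln m)$). Plugging this selector into \textsc{USS-plus-LIS} in the role prescribed by Figure~\ref{fig:routing} gives a concrete instance of the algorithm to which Theorem~\ref{the:local} applies.

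Finally, applying Theorem~\ref{the:local} to this instance with $\epsilon = 1/e$ yields stability against every $(\rho, b)$-adversary with
\[
\rho \;<\; \frac{\epsilon}{\Delta+1} \;=\; \frac{1}{e\,(\Delta+1)},
\]
which is exactly the bound claimed in the corollary.

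I do not expect any real obstacle here: the corollary is a one-line substitution, and the only point that needs minimal care is checking that the boundary value $\epsilon = 1/e$ is indeed admissible in Theorem~\ref{theorem:selector_existence}, which it is by the weak inequality in its hypothesis. Everything else (polynomial size of the selector, the role of $m$ and $\Delta$ as local-knowledge parameters, and the reduction to LIS-stability via $(\rho,T)$-frequent schedules) has already been established in the preceding results and can be cited verbatim.
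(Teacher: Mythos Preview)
Your proposal is correct and matches the paper's approach exactly: the corollary is stated immediately after Theorem~\ref{the:local} without a separate proof, precisely because it is the instantiation you describe---take the $(m,\Delta+1,1/e)$-universally-strong selector guaranteed by Theorem~\ref{theorem:selector_existence} and plug $\epsilon=1/e$ into Theorem~\ref{the:local}. Your observation that $\epsilon=1/e$ is admissible by the weak inequality in Theorem~\ref{theorem:selector_existence} is the only point needing a moment's thought, and you handle it correctly.
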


As it can be readily seen, the \textsc{USS-plus-LIS} algorithm for a network $G$ requires some knowledge of the value of the in-degree of its conflict graph $H$ (i.e., of $\Delta_{in}^H$). In order to obtain $H$ it is necessary to gather the whole topology of $G$.   However, as the next lemma shows, $\Delta_{in}^H$ can be bounded by the in-degree of the network $G$ (denoted $\Delta_G$).

\begin{lemma}
\label{lem:GvsH}
$\Delta_{in}^H \leq \Delta_G^2 + \Delta_G -1$, provided $\Delta_G >0$.
\end{lemma}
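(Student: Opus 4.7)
The plan is a direct counting argument. Fix an arbitrary vertex $v = (a,b)$ of $H(G)$, i.e., an arbitrary directed link of $G$, and bound the in-degree of $v$ in $H$. Unpacking the conflict rule from the definition of $H(G)$, a link $u = (c,d) \in E_G$ with $u \ne v$ contributes an incoming edge $(u,v) \in E_{H(G)}$ if and only if the transmitter $c$ of $u$ is either the receiver $b$ of $v$ (first conflict clause) or a neighbor of $b$ in $G$ (second conflict clause). Equivalently, the in-neighbors of $v$ in $H$ are exactly the links in $E_G \setminus \{v\}$ whose source lies in the closed neighborhood $N_G[b] := \{b\} \cup N_G(b)$ of $b$ in $G$.

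I would then estimate this set in two stages. First, since $G$ is symmetric, the out-degree of every vertex equals its in-degree and is therefore at most $\Delta_G$; in particular $|N_G[b]| \le \Delta_G + 1$. Second, each vertex $c \in N_G[b]$ has at most $\Delta_G$ out-links in $G$. Multiplying these two bounds yields at most $(\Delta_G+1)\cdot \Delta_G = \Delta_G^2 + \Delta_G$ links of $G$ whose source lies in $N_G[b]$.

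To finish, I would subtract one to account for $v$ itself: by symmetry $(b,a) \in E_G$, so $a \in N_G(b) \subseteq N_G[b]$, and $v = (a,b)$ is precisely the out-link from $a$ to $b$, so $v$ is counted exactly once in the raw bound above. Excluding it gives the claimed inequality $\Delta_G^2 + \Delta_G - 1$; since $v$ was arbitrary, this bounds $\Delta_{in}^H$ globally.

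There is no real technical obstacle here — the only care needed is the standing convention that $G$ is simple (so that $\{b\}$ and $N_G(b)$ are disjoint, and so that $v$ lands in exactly one of the two conflict clauses and is therefore counted exactly once in the raw bound). The hypothesis $\Delta_G > 0$ enters only to ensure that $H$ is nonempty and that the right-hand side $\Delta_G^2 + \Delta_G - 1$ is nonnegative, so the stated inequality is meaningful.
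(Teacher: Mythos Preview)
Your proof is correct and is essentially the same counting argument as the paper's own proof; the only difference is organizational. The paper fixes $e=(u,v)$ and splits the in-neighbors $e'=(u',v')$ into three disjoint cases --- $u'=u$ (at most $\Delta_G-1$), $u'=v$ (at most $\Delta_G$), and $u'\in N_G(v)\setminus\{u\}$ (at most $(\Delta_G-1)\Delta_G$) --- and sums, whereas you bundle all sources into the closed neighborhood $N_G[b]$, multiply $(\Delta_G+1)\cdot\Delta_G$, and then subtract $1$ for $v$ itself; both routes arrive at the same bound by the same underlying count.
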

\begin{proof}
If $\Delta_{in}^H=0$, then the lemma is trivially true. Otherwise, consider a vertex $e$ in $H$ of maximum in-degree $deg(e) = \Delta_{in}^H$. Since $\Delta_{in}^H \neq 0$, there is at least one edge $(e',e) \in H$ such that, in $G$,  $e$ cannot successfully transmit at the same time instant when $e'$ transmits. Let us denote $e=(u,v)$ and $e'=(u',v')$, and let us consider the different scenarios where $e$ and $e'$ may conflict.

Now, we make a case analysis regarding the possible conflicts in $G$ (note that its in-degree is equal to its  out-degree, since $G$ is symmetric):
\begin{enumerate}
\item
$u' = u$ and $v' \neq v$ (a node $u=u'$ cannot transmit messages to 2 different receivers): there are at most $\Delta_G-1$ such links $e'$, given fixed link $e$.

\item
$u' = v$ (if $u'$ transmits, it cannot listen at the same time): there are at most $\Delta_G$ such links $e'$, given fixed link $e$.

\item
$u' \neq u$ is a neighbor of $v$ (i.e., $v$ can hear both from $u$ and $u'$): there are at most $\Delta_G-1$ neighbors of $v$ different than node $u$, and each of them has, at most, $\Delta_G$ different links. This gives $\Delta_G^2 - \Delta_G$ such links $e'$, given fixed link $e$.
\end{enumerate}

Therefore, in overall there are at most $(\Delta_G-1) + \Delta_G  + (\Delta_G^2 - \Delta_G) = \Delta_G^2 + \Delta_G -1$ such links. 
\end{proof}


The previous lemma shows that \textsc{USS-plus-LIS} can be seen as a local-knowledge algorithm, in the sense that it only requires some knowledge about two basic system parameters: the number of links and the network's in-degree.  In Section~\ref{sec:coloring}
we will also look at a solution that 
requires some  global-knowledge of $G$.

\subsection{Optimality of the {\sc USS-plus-LIS} algorithm}


In the next theorem, we show an impossibility result regarding routing algorithms, either based on selectors or not, that only make use of upper bounds on the number of links and on the network's degree.\footnote{To be strict, it is also necessary that each node $v$ decides whether or not to an outgoing edge $e=(u,v)$ should transmit at time $t$ based on $t$ and on the ID of node $u$.}

\begin{theorem}
\label{the:impossibility}
No routing algorithm that only makes use of upper bounds on the number of links and on the network's degree guarantees stability for all networks of degree at most $\Delta$, provided the injection rate $\rho = \omega(1 / \Delta^2)$.
\end{theorem}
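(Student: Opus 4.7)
The plan is to exhibit one bad network --- the symmetric complete graph $K_{\Delta+1}$ on $\Delta+1$ vertices, viewed as a radio network with every pair of vertices joined by the two anti-parallel directed links --- and to show that on this single instance the total throughput is at most one packet per round for \emph{any} routing algorithm, local- or global-knowledge. Since $K_{\Delta+1}$ has in-degree exactly $\Delta$, it lies in the family of networks the theorem quantifies over, so an impossibility on this one example implies the theorem.

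The crucial structural step is to observe that the conflict graph $H(K_{\Delta+1})$ is a clique on all $m=\Delta(\Delta+1)$ directed links. Indeed, for any two distinct directed links $e_1=(a,b)$ and $e_2=(c,d)$, completeness of the underlying graph implies that the transmitter $a$ of $e_1$ is either equal to the receiver $d$ of $e_2$ or a neighbor of $d$; by the definition of conflict in Section~2 this means $e_1$ blocks $e_2$, and by symmetry $e_2$ blocks $e_1$. Hence in any round at most one link of $K_{\Delta+1}$ can be used for a successful transmission, regardless of which algorithm is scheduling.

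Next I would plug in a simple bad adversary: take any $\rho=\omega(1/\Delta^2)$, and in every window of $T$ rounds inject $\lfloor\rho T\rfloor$ packets on each link, each routed along the single link on which it was injected. This is a legitimate $(\rho,b)$-adversary for any $b\ge 1$. Over $T$ rounds the total number of injected packets is at least $\rho T\cdot \Delta(\Delta+1)-O(1)$, whereas by the clique observation the total number of packets delivered in $T$ rounds is at most $T$. Because $\rho\cdot\Delta(\Delta+1)\to\infty$ under the hypothesis $\rho=\omega(1/\Delta^2)$, the backlog grows without bound, violating stability on this network.

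The main --- and really only --- nontrivial step is the pairwise-conflict observation of the second paragraph, which reduces to the textbook definition of the conflict relation together with the fact that every two vertices of a complete graph are neighbors. The rest is a one-line counting comparison between injections and deliveries. Notice that the argument never uses a property of the algorithm beyond the physical fact that the successfully transmitting links in a given round form an independent set in $H$, so it applies in particular to any algorithm that only uses the bounds on $m$ and on the degree (and on node IDs), as required by the theorem.
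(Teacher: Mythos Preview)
Your proof is correct and, in fact, proves a statement stronger than the theorem: on $K_{\Delta+1}$ \emph{no} algorithm---local- or global-knowledge---can be stable once $\rho>1/(\Delta(\Delta+1))$, simply because the conflict graph is a clique and the network-wide throughput is one packet per round. This is essentially the same argument the paper uses later for Theorem~\ref{theorem:optimality_coloring}; you have observed that it already suffices for Theorem~\ref{the:impossibility} as stated.

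The paper's own proof is different and, in a sense, more informative. It works not with a single network but with a family of depth-$2$ $\Delta$-regular trees obtained by swapping the root with each leaf. A local-knowledge algorithm, seeing only node IDs and the global bounds, cannot tell which tree it is running on, so its schedule on the common edge set $E$ must simultaneously serve every tree in the family. Each of the $\Delta(\Delta-1)$ leaf edges is, in \emph{some} tree, incident to the root and hence must transmit alone in $\approx\rho\tau$ rounds; summing these exclusivity constraints over all edges exceeds $\tau$ once $\rho\cdot\Delta(\Delta-1)>1$. The point is that each individual tree has $\chi(H)=\Theta(\Delta)$, so a global-knowledge algorithm \emph{would} be stable on it up to $\rho=\Theta(1/\Delta)$; the obstruction really comes from the algorithm's inability to distinguish the trees. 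Your clique example, by contrast, is a network that is hard for everyone, so it establishes the numerical bound but not the local-vs-global separation that motivates the theorem's hypothesis.

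In short: your argument is cleaner and entirely sufficient for the theorem as written; the paper's argument buys an additional qualitative conclusion (the $1/\Delta^{2}$ barrier is caused by lack of topological knowledge, not by any intrinsic hardness of the instance) that your clique cannot witness.
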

\begin{proof}
See Appendix~\ref{app:impossibility}.
\end{proof}

If we apply Theorem~\ref{the:impossibility} to \textsc{USS-plus-LIS}, then our goal  is to find how close to $\rho = O(1/\Delta_G^2)$ is its maximum injection rate for which it guarantees stability.

If we consider Theorem~\ref{the:local} with $\Delta=\Delta_{in}^H$, we have that \textsc{USS-plus-LIS} can be stable for $\rho = O(1/\Delta_{in}^H)$. Furthermore, by Lemma~\ref{lem:GvsH} we know that $\Delta_{in}^H$ can be as large as $\Theta(\Delta_G^2)$. Then, we have that \textsc{USS-plus-LIS}  guarantees stability for $\rho = O(1/\Delta_G^2)$  for all networks $G$, which matches the result in Theorem~\ref{the:impossibility}. This proves that the \textsc{USS-plus-LIS} algorithm is asymptotically optimal regarding the injection rate for which stability is guaranteed.

\section{A global-knowledge routing algorithm}
\label{sec:coloring}

In this section, we introduce a global-knowledge packet-oblivious routing algorithm, which we call \textsc{Coloring-plus-Alg}, that is based on using graph coloring as \emph{transmission schedules}. Such an algorithm  does not take into account any historical information. However, it has to be seeded by some information about the network topology (i.e., it is a global-knowledge routing algorithm).

Next, we show that  the \textsc{Coloring-plus-LIS} algorithm (i.e., the  \textsc{Coloring-plus-Alg} algorithm where \textsc{Alg} is the Longest-In-System scheduling policy), guarantees stability, provided a given packets' injection admissibility condition is fulfilled.

Now, we proceed withe the main results in this section. But before we introduce the \textsc{Coloring-plus-Alg} routing algorithm, we state the following fact regarding the relationship between vertex coloring in a conflict graph, and its use as a transmission schedule.

\begin{fact}[\cite{7950907}]
	Vertex coloring of the conflict graph $H(G)$ using $x$ colors is equivalent to a schedule of length $x$ that successfully transmits a packet via each directed link of network $G$. 
\end{fact}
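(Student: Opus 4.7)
The plan is to show both directions of the equivalence by identifying time slots of the schedule with color classes of the coloring, since the conflict graph $H(G)$ was tailored precisely so that independent sets and conflict-free sets of links coincide.

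For the ``coloring implies schedule'' direction, suppose $c : V_{H(G)} \to \{1, \ldots, x\}$ is a proper vertex coloring of $H(G)$. I would define a schedule of length $x$ in which, at round $i$, exactly the links $e \in E_G$ with $c(e) = i$ transmit. Fix any link $v$ and consider its transmission at round $c(v)$: if some link $u$ with $(u,v) \in E_{H(G)}$ were transmitting in that same round, then $c(u) = c(v)$, contradicting properness of $c$. Hence no link that conflicts with $v$ is active at round $c(v)$, so by the definition of $E_{H(G)}$ the transmission across $v$ is successful. Each link therefore transmits successfully within the $x$-round schedule.

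For the converse, assume we have a schedule of length $x$ in which every link transmits successfully at least once. For each link $e$, let $c(e) \in \{1, \ldots, x\}$ be some round at which $e$ transmits successfully; such a round exists by hypothesis. I would then verify that $c$ is a proper coloring of $H(G)$. Suppose for contradiction that $(u,v) \in E_{H(G)}$ but $c(u) = c(v) = r$. Then in round $r$ both $u$ and $v$ are transmitting, and by choice of $c$ both transmissions are successful. However, the very definition of $E_{H(G)}$ says that while $u$ transmits, $v$ cannot successfully transmit, which is the desired contradiction. Thus $c(u) \neq c(v)$ for every directed edge of $H(G)$, so $c$ is a proper coloring with at most $x$ colors.

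No step presents a genuine obstacle; the only subtle point worth flagging explicitly is in the converse direction, where one must choose a \emph{single} ``success round'' per link to serve as its color, rather than just observing that the rounds partition the link set (a link might transmit in more rounds than just its color round, possibly unsuccessfully). The asymmetry of the directed edges in $H(G)$ is also inconsequential, because ``$u$ transmits $\Rightarrow$ $v$ fails'' already rules out $u$ and $v$ being simultaneously successful, regardless of whether the conflict edge points from $u$ to $v$ or vice versa.
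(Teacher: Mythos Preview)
Your argument is correct in both directions. Note, however, that the paper does not actually prove this Fact: it is quoted as a known result from~\cite{7950907} and stated without proof. The paper's own Lemma~\ref{lem:schedule} establishes essentially the forward direction (coloring $\Rightarrow$ schedule) by exactly the same mechanism you use---letting color class $i$ transmit in round $i$ and appealing to the definition of $E_{H(G)}$ to rule out collisions---while the converse direction (schedule $\Rightarrow$ coloring) is never spelled out in the paper. Your observation that one must pick a single successful round per link to serve as its color, and that the directedness of the conflict edge already suffices to derive a contradiction, are the right points to make explicit and are exactly what a full proof of the cited Fact requires.
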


Note that every set of vertices of same color can be extended to a maximal independent set. The resulting family of independent sets is still a feasible schedule that guarantees no conflicts and is no worse than just coloring. In fact, it may allow some links to transmit more than once during the schedule, without increasing the length of the schedule.

Following, we show that coloring of a collision graph can be used to obtain a transmission schedule, where each link is guaranteed to regularly transmit.

\begin{lemma}
\label{lem:schedule}
A $k$-coloring of collision graph $H$ provides a $(1/k, k)$-frequent schedule.
\end{lemma}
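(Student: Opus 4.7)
The plan is to interpret a $k$-coloring of the conflict graph $H$ as a cyclic transmission schedule of period $k$, and then verify the frequency guarantee directly from the definition of $(\rho, T)$-frequent.

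First I would let $V_1, \ldots, V_k$ denote the color classes of the given $k$-coloring of $H$. Since each $V_i$ is independent in $H$, and vertices of $H$ are exactly the directed links of $G$, no two links in $V_i$ conflict with each other; hence all links in $V_i$ can transmit simultaneously in the underlying radio network $G$ without collisions. I would then define the schedule cyclically: in any round $t$, every link belonging to the color class indexed by $(t \bmod k) + 1$ attempts a transmission. By the fact cited just before the lemma, this is precisely the length-$k$ schedule that successfully transmits along each directed link of $G$ once per cycle, now extended to run forever by periodic repetition.

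To verify the $(1/k, k)$-frequent property, I would consider an arbitrary window of rounds of length exactly $k$. In such a window, the sequence of phases $(t \bmod k)$ visits each residue in $\{0, 1, \ldots, k-1\}$ exactly once, regardless of where the window starts. Consequently, every color class is activated exactly once inside the window, so every link of $G$ is scheduled for (and, being conflict-free within its class, completes) at least one successful transmission. Since the required threshold is $\rho \cdot T = (1/k) \cdot k = 1$, and at least one transmission occurs whenever the link has a packet to send, this matches the definition.

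I do not anticipate a substantive obstacle: the only delicate point is the window-alignment argument, which is immediate from the pigeonhole observation that any $k$ consecutive residues modulo $k$ form a complete residue system. The rest is just bookkeeping between the coloring, the independent-set property in $H$, and the absence of radio conflicts in $G$.
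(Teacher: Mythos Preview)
Your proposal is correct and follows essentially the same approach as the paper: partition the vertices of $H$ into color classes, schedule color class $(t \bmod k)$ in round $t$, and observe that independence in $H$ means simultaneous collision-free transmission in $G$, so each link succeeds exactly once in any window of $k$ consecutive rounds. Your explicit pigeonhole remark about residues covering $\{0,\ldots,k-1\}$ in any length-$k$ window is slightly more careful than the paper's wording, but the argument is the same.
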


\begin{proof}
Let us split the vertices $V_H$ of the graph $H$ into sets $V_H^i$ for $i = 0,1,\dots,k-1$, where every vertex in $V_H^i$ is assigned the $i$-th color in the vertex coloring of graph $H$. Each link in the graph $G$ is represented by one vertex in $V_H$, and therefore each link is assigned a unique color. According to the definition of the conflict graph $H$, if there is no edge $(u,v) \in E_H$, then links $u \in E_G$ and $v \in E_G$ can deliver their packets simultaneously, without a collision. Therefore, if at a given round $t$ only links of $(t \mod i)$-th color transmit, then no collision occurs. Since each link has a color $i \in \{0,1,\dots, k-1\}$ assigned to it, then each link will successfully transmit a packet once each $k$ consecutive rounds (as far as there is one packet waiting in its queue). 
\end{proof}

Since $\chi(H)$-coloring is an optimal coloring of graph $H$, we have the following result.

\begin{corollary}
\label{cor:schedule}
An optimal coloring of collision graph $H$ provides a $(1/\chi(H), \chi(H))$-frequent schedule.
\end{corollary}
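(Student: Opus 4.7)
The plan is simply to instantiate Lemma~\ref{lem:schedule} at the value $k=\chi(H)$. By definition, $\chi(H)$ is the chromatic number of $H$, i.e., the smallest integer $k$ for which a proper $k$-coloring of the vertices of $H$ exists, so an optimal coloring of $H$ is a $\chi(H)$-coloring in the sense used in Lemma~\ref{lem:schedule}.

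Once we have this, Lemma~\ref{lem:schedule} applied with $k=\chi(H)$ immediately yields a $(1/\chi(H),\chi(H))$-frequent schedule, which is exactly the conclusion of the corollary. No additional argument is required.

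The only thing to double-check is that the notion of ``optimal coloring'' in the corollary matches the notion of ``$k$-coloring'' in the lemma (which it does, since the lemma's proof only uses that each vertex of $H$ is assigned one of $k$ colors in a proper coloring, and optimality does not weaken this property). There is no real obstacle here; the corollary is a direct specialization of the lemma to the minimum possible value of $k$.
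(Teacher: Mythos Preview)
Your proposal is correct and matches the paper's approach exactly: the paper derives the corollary from Lemma~\ref{lem:schedule} by noting that an optimal coloring of $H$ is a $\chi(H)$-coloring and then instantiating $k=\chi(H)$.
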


Once we have made it clear  that coloring of a collision graph can be used to obtain a transmission schedule, the code of the \textsc{Coloring-plus-Alg} algorithm is shown in Figure~\ref{fig:global}.

\begin{figure}[t]
\begin{mdframed}
\begin{center}
\begin{enumerate}
\item
Use optimal coloring of graph $H$ as the transmission schedule, and repeat it indefi\-nitely.
\item
When there are several packets awaiting in a single queue, choose the packet to be transmitted according to \textsc{Alg},  breaking ties in any arbitrary fashion.
\end{enumerate}
\caption{The \textsc{Coloring-plus-Alg} algorithm for graph $G$.}
\label{fig:global}
\end{center}
\end{mdframed}
\end{figure}

\subsection{The \textsc{Coloring-plus-LIS} algorithm}

Now, we show that \textsc{Coloring-plus-LIS} (i.e., the  \textsc{Coloring-plus-Alg} algorithm where \textsc{Alg} is the Longest-In-System scheduling policy), guarantees stability, provided a given packets' injection admissibility condition is fulfilled.

\begin{theorem}
\label{the:coloring}
The \textsc{Coloring-plus-LIS} algorithm is stable provided $\rho < 1/\chi(H)$, where $\chi(H)$ is the chromatic number of the conflict graph $H$ of the network $G$.
\end{theorem}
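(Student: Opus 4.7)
The plan is to reuse exactly the same two-step template that worked for Theorem~\ref{the:local}, just swapping the universally-strong-selector schedule for the coloring schedule. The two ingredients are (i) a frequent-schedule guarantee produced by the underlying combinatorial structure, and (ii) a generic stability lemma for LIS when fed a frequent schedule.

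First I would invoke Corollary~\ref{cor:schedule}: an optimal vertex coloring of the conflict graph $H$, used as described in the \textsc{Coloring-plus-Alg} code, yields a $(1/\chi(H), \chi(H))$-frequent schedule. Concretely, in any window of $\chi(H)$ consecutive rounds each color class appears exactly once, so every link is selected (without collision, by the definition of $H$) at least once per window; scaling across a window of length $T$ gives at least $T/\chi(H)$ successful transmission opportunities per link, provided that link has enough packets to send. This is precisely what the definition of a $(\rho', T)$-frequent schedule with $\rho' = 1/\chi(H)$ requires.

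Second, I would apply Lemma~\ref{lem:LIS} from Appendix~\ref{app:auxiliary} (the same lemma used in the proof of Theorem~\ref{the:local}) with the parameter $\rho' = 1/\chi(H)$. That lemma says that the composition of a $(\rho', T)$-frequent schedule with the LIS queueing policy is stable against every $(\rho, b)$-adversary whenever $\rho < \rho'$. Plugging in $\rho' = 1/\chi(H)$ immediately yields stability for every $\rho < 1/\chi(H)$, which is the claim.

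Since both ingredients are already established, there is essentially no new obstacle: the only thing to check carefully is that the coloring-based schedule satisfies the \emph{conditional} requirement in the definition of a frequent schedule, namely that a link transmits $\rho' T$ times \emph{provided} at least $\rho' T$ packets are queued at the start of the window. This is immediate from the observation (noted just before Lemma~\ref{lem:schedule}) that we may extend each color class to a maximal independent set, so a link is only idle in its designated slot when its own queue is empty, in which case the ``provided'' hypothesis fails anyway. With that small bookkeeping remark in place, the theorem follows at once from Corollary~\ref{cor:schedule} and Lemma~\ref{lem:LIS}.
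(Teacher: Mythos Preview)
Your proposal is correct and follows essentially the same two-step argument as the paper: invoke Corollary~\ref{cor:schedule} to get a $(1/\chi(H),\chi(H))$-frequent schedule, then apply Lemma~\ref{lem:LIS} with $\rho'=1/\chi(H)$. The extra bookkeeping remark about the conditional clause in the frequent-schedule definition is harmless but unnecessary---Lemma~\ref{lem:schedule} already handles it directly (a link succeeds in its color slot whenever it has a packet), without needing the maximal-independent-set extension.
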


\begin{proof}
We start the proof with referring to Corollary~\ref{cor:schedule}, which shows that coloring of a collision graph can be used to obtain a $(1/\chi(H),\chi(H))$-frequent schedule $\mathcal{C}$.

Let us take any $\rho = 1/\chi(H) - \epsilon$, for some $\epsilon>0$. We can use Lemma~\ref{lem:LIS} in the Appendix~\ref{app:auxiliary} with $\mathcal{S}=\mathcal{C}$ (so, $\rho' = 1/\chi(H)$) to show that \textsc{Coloring-plus-LIS} is stable against any $(\rho,b)$-adversary in the radio network model. 
\end{proof}

Observe that, contrary to the {\sc USS-plus-LIS} protocol, the \textsc{Coloring-plus-LIS} algorithm requires global-knowledge of the structure of the graph: first, to construct $H$, and then to obtain its optimal coloring.

\subsection{Optimality of the  {\sc Coloring-plus-LIS} algorithm}

Now, we show that the  {\sc Coloring-plus-LIS} algorithm is optimal regarding the injection rate, in the sense that no algorithm can guarantee stability for a higher injection rate that that provide by it.

\begin{theorem}
\label{theorem:optimality_coloring}
No algorithm can be stable for all networks against a $(\rho,b)$-adversary for $\rho>1/\chi(H)$.
\end{theorem}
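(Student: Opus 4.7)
The plan is to establish optimality by explicitly constructing, for each target chromatic number $\chi$, a network on which no routing algorithm can cope with injection rate strictly above $1/\chi$. The natural candidate is a star: let $G$ consist of a central node $v$ together with $k$ leaves $u_1, \ldots, u_k$, so that $E_G$ contains $2k$ directed links in total.

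My first step would be to check, using the paper's conflict definition (transmitter of one link equals the receiver of the other, or is a neighbor of it), that every pair of these $2k$ links interferes. The three sub-cases --- both links outgoing from $v$, both incoming to $v$, or one of each kind (with matching or mismatched spokes) --- are immediate, because $v$ is simultaneously the common transmitter/receiver and a neighbor of every leaf. Hence $H(G) = K_{2k}$, so $\chi(H(G)) = 2k$.

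Next, I would exhibit a $(\rho, b)$-adversary that injects packets at rate $\rho$ on every single one of the $2k$ directed links: at each leaf $u_i$ inject a stream routed along the single link $(u_i, v)$, and at the central node $v$ inject a stream routed along each link $(v, u_i)$. Every individual link sees at most $\rho T + b$ packets in any window of length $T$, so the adversary is legal. The total number of packets injected into the system in such a window is therefore $2k(\rho T + b) = \chi \rho T + \chi b$.

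Finally, I would invoke the clique bottleneck: because $H(G)$ is a clique, at most one of the $2k$ links can fire successfully in any single round (any two simultaneous transmissions would collide at or near $v$), so over $T$ rounds the system can discharge at most $T$ packets regardless of the algorithm used. The backlog after $T$ rounds is thus at least $(\chi\rho - 1)T + \chi b$, which diverges as $T \to \infty$ whenever $\rho > 1/\chi$, matching the upper bound of Theorem~\ref{the:coloring}. The main technical point is the case analysis showing $H(G) = K_{2k}$; once that is in place, instability follows from a textbook injection-versus-capacity counting. A small irregularity is that the star naturally realises only even values of $\chi(H(G))$, but this already suffices to show that the bound $1/\chi(H)$ cannot be universally improved; an odd target can be accommodated with a mild asymmetric tweak (e.g., removing one of the outgoing links) if desired.
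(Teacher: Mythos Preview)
Your argument is correct and follows the same strategy as the paper: exhibit a network whose conflict graph is a complete graph (so at most one link can succeed per round) and then compare total injections against total throughput. The paper happens to use $G=K_n$ (giving $\chi(H)=n(n-1)$) instead of your star (giving $\chi(H)=2k$), and your line ``total injected $=2k(\rho T+b)$'' should really be a \emph{lower} bound of the form $\ge 2k\lfloor\rho T\rfloor$ --- what you wrote is the adversary's budget, not its actual output --- but the divergence conclusion is unaffected.
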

\begin{proof}
See Appendix~\ref{app:optimality_coloring}.
\end{proof}

\subsection{Global-knowledge vs local-knowledge  routing  protocols}

Regarding the {\sc Coloring-plus-LIS} protocol, by the Brooks' theorem~\cite{Brooks1941}, we have that $\chi(H) \leq \Delta^H +1$. Let $indeg^H(e)$ (and $outdeg^H(e)$) denote the indegree (outdegree) of node $e$ in graph $H$. Recall that each edge in the network graph was replaced by two oppositely directed links. This means that, if a link $e$ blocks $outdeg^H(e)$ other links, then the opposite link $e'$ is blocked by  $ indeg^H(e') = outdeg^H(e)$ links.
Therefore, $\Delta^H = \Theta(\Delta^H_{in})$.
 Then, Theorem~\ref{the:coloring} guarantees stability for $\rho=O(1/\Delta^H_{in})$. 

On the other hand, from the result in Corollary~\ref{corollary:best_selector}, we have that \textsc{USS-plus-LIS} can only guarantee stability for $\rho=O(1/(e \cdot \Delta_{in}^H))$. This implies that, by using the {\sc Coloring-plus-LIS} protocol, it is possible to guarantee stability for a wider range of injection rates than by using the {\sc USS-plus-LIS} protocol: namely, the injection rate for which stability is guaranteed is $e$ times higher.


\section{Extension of the results to other scheduling policies}
\label{sec:extension}

In this section, we show that the results obtained in sections~\ref{sec:coloring} and~\ref{sec:local_routing} for routing combined with LIS (Longest In System) can be extended to other scheduling policies; namely, NFS (Nearest-From-Source), SIS (Shortest-In-System) and FTG (Farthest-To-Go). Indeed, for such a scheduling policies, Theorems~\ref{the:par_local} and~\ref{the:par_global} respectively parallelize the analogous results in Theorems~\ref{the:local} and~{\ref{the:coloring} obtained for LIS.

\begin{theorem}
\label{the:par_local}
Given a network $G$, the \textsc{USS-plus-Alg} algorithm, where $\textsc{Alg}  \in \{ \textrm{NFS,SIS,FTG} \}$, is stable against any $(\rho,b)$-adversary, for $\rho < \frac{\epsilon}{\Delta+1}$.
\end{theorem}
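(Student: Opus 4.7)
The plan is to follow the same initial step as in the proof of Theorem~\ref{the:local}, extracting a frequent schedule from the universally-strong selector, but then, instead of invoking the LIS-specific Lemma~\ref{lem:LIS}, to reduce the problem to adversarial routing in a \emph{wired} network with adversarial link failures, a setting in which SIS, NFS and FTG are already known to be stable by the results of Alvarez et al.~\cite{AlvarezBDSF05}.

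First I would reuse the selector-based counting argument from the proof of Theorem~\ref{the:local}: for any link $z\in E_G$, among its at most $\Delta$ conflicting links, the $(m,\Delta+1,\epsilon)$-universally-strong selector of length $t$ contains at least $\epsilon t/(\Delta+1)$ rows in which $z$ is scheduled while all of its conflicting links are silent. Hence \textsc{USS-plus-Alg} realizes a $(\epsilon/(\Delta+1), t)$-frequent schedule $\mathcal{S}$ on $G$; this part is independent of the queuing policy \textsc{Alg} installed on top, so it applies uniformly to LIS, SIS, NFS and FTG.

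Next I would translate each wireless execution into an execution on the same underlying graph viewed as a wired network, by declaring every round in which $\mathcal{S}$ does not grant link $z$ a collision-free slot to be a round in which $z$ has ``failed''. The resulting failure pattern is oblivious and periodic, and leaves each link functional on at least an $\epsilon/(\Delta+1)$ fraction of the rounds (amortized over windows of length $t$). Consequently, every $(\rho,b)$-adversary with $\rho < \epsilon/(\Delta+1)$ on the wireless side maps to an admissible adversary in the wired-with-failures model, at the cost of at most a constant additive increase in the burstiness parameter, absorbed by the $t$-periodicity of $\mathcal{S}$. Invoking the stability theorems of~\cite{AlvarezBDSF05} for SIS, NFS and FTG in that model then yields stability of \textsc{USS-plus-Alg} for each of the three policies.

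The main obstacle is making the reduction policy-faithful: the decisions SIS, NFS or FTG make in the wireless execution must coincide with those the same policy makes in the reduced wired-with-failures instance (since each policy is packet-oblivious and depends only on local queue state, which is preserved by the reduction, this should go through cleanly), and the periodic failure patterns produced by $\mathcal{S}$ must be within the class of adversarial failure patterns for which~\cite{AlvarezBDSF05} establishes stability. Once these two checks are carried out, the quantitative bound $\rho < \epsilon/(\Delta+1)$ transfers verbatim from the frequent-schedule analysis and the theorem follows.
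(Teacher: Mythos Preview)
Your proposal is correct and follows essentially the same route as the paper: first extract the $(\epsilon/(\Delta+1),t)$-frequent schedule from the selector exactly as in Theorem~\ref{the:local}, then reduce the wireless execution to the wired failure model of~\cite{AlvarezBDSF05} by treating rounds without a guaranteed collision-free slot as link failures, and finally invoke the known stability of SIS, NFS and FTG in that model. This is precisely the content of the paper's Lemma~\ref{lem:reduction}, and your two ``obstacles'' (policy-faithfulness of the reduction and admissibility of the induced failure pattern) are exactly the points checked there.
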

\begin{proof}
The proof is similar to that in Theorem~\ref{the:local}. The only difference is that, instead of Lemma~\ref{lem:LIS}, we can apply the results in Lemma~\ref{lem:reduction} for NFS, SIS and FTG  (see Appendix~\ref{app:reduction}) to deduce that such an algorithm is stable against any $(\rho,b)$-adversary, where $\rho < \frac{\epsilon}{\Delta+1}$. 
\end{proof}

\begin{theorem}
\label{the:par_global}
The \textsc{Coloring-plus-Alg} algorithm, where $\textsc{Alg}  \in \{ \textrm{NFS,SIS,FTG} \}$, is stable provided $\rho < 1/\chi(H)$, where $\chi(H)$ is the chromatic number of the conflict graph $H$ of the network $G$.
\end{theorem}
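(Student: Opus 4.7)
The plan is to mirror the proof of Theorem~\ref{the:coloring} essentially line for line, performing exactly the same substitution that was performed in going from Theorem~\ref{the:local} to Theorem~\ref{the:par_local}: at the very last step, invoke Lemma~\ref{lem:reduction} from Appendix~\ref{app:reduction} in place of Lemma~\ref{lem:LIS}. This works because Lemma~\ref{lem:reduction} is stated abstractly in terms of a $(\rho', T)$-frequent schedule, and hence is indifferent to whether the schedule was produced by a universally strong selector, as in Section~\ref{sec:local_routing}, or by an optimal coloring of the conflict graph, as in Section~\ref{sec:coloring}.

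Concretely, I would first appeal to Corollary~\ref{cor:schedule} to extract a $(1/\chi(H), \chi(H))$-frequent schedule $\mathcal{C}$ from an optimal coloring of the conflict graph $H(G)$; by construction, the MAC layer of \textsc{Coloring-plus-Alg} realizes exactly $\mathcal{C}$. Then, given any $\rho < 1/\chi(H)$, fix $\epsilon > 0$ with $\rho = 1/\chi(H) - \epsilon$ and apply Lemma~\ref{lem:reduction} to the schedule $\mathcal{S} = \mathcal{C}$ (so $\rho' = 1/\chi(H)$) together with the chosen scheduling policy $\textsc{Alg} \in \{\textrm{NFS}, \textrm{SIS}, \textrm{FTG}\}$. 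This yields stability of \textsc{Coloring-plus-Alg} against every $(\rho, b)$-adversary in the radio model, completing the proof.

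The only non-routine ingredient, and therefore the main obstacle, is Lemma~\ref{lem:reduction} itself, whose proof lives in the appendix. Its engine is a reduction from stability in the radio model under a $(\rho', T)$-frequent schedule to stability in the wired model with adversarial link failures studied in~\cite{AlvarezBDSF05}: rounds in which the schedule forbids a link from transmitting are reinterpreted as adversarial link failures, while the coloring (or selector) guarantees that every scheduled round is collision-free and hence behaves like a successful wired transmission. Because SIS, NFS and FTG are already known to be universally stable against the failure-based wired adversary of~\cite{AlvarezBDSF05}, the reduction closes the argument uniformly, with no policy-specific case analysis, and in particular requires no changes to adapt from the selector-based schedule of Theorem~\ref{the:par_local} to the coloring-based schedule here.
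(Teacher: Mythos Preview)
Your proposal is correct and matches the paper's own proof essentially line for line: both invoke Corollary~\ref{cor:schedule} to obtain the $(1/\chi(H),\chi(H))$-frequent schedule $\mathcal{C}$, set $\rho = 1/\chi(H)-\epsilon$, and then apply Lemma~\ref{lem:reduction} with $\mathcal{S}=\mathcal{C}$, relying on the stability of NFS, SIS, FTG in the wired failure model of~\cite{AlvarezBDSF05}. The paper additionally remarks that \textsc{Coloring-plus-Alg} is precisely the algorithm $\mathcal{S}$-\textsc{Plus-Alg} produced in the proof of Lemma~\ref{lem:reduction} when $\mathcal{S}=\mathcal{C}$, a point you implicitly make when you say the MAC layer of \textsc{Coloring-plus-Alg} realizes exactly $\mathcal{C}$.
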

\begin{proof}
We will reduce the packet scheduling in radio network problem to the problem of packet scheduling in the wired failure model~\cite{AlvarezBDSF05}, in which these policies are known to be stable.

We start the proof with referring to Corollary~\ref{cor:schedule}, which shows that coloring of a collision graph can be used to obtain a $(1/\chi(H),\chi(H))$-frequent schedule $\mathcal{C}$.

Let us take any $\rho = 1/\chi(H) - \epsilon$, for some $\epsilon>0$. Now, we can use Lemma~\ref{lem:reduction} with $\mathcal{S}=\mathcal{C}$ (so, $\rho' = 1/\chi(H)$) and $\textsc{Alg}  \in \{\textrm{NFS,SIS,FTG}\}$ (with $\rho''=1-\epsilon$) to show that we can build an algorithm that is stable against any $(\rho,b)$-adversary in the radio network model (see Appendix~\ref{app:reduction}). Note that \textsc{Coloring-plus-Alg} is a special case of the algorithm built in the proof of Lemma~\ref{lem:reduction} with $\mathcal{S}=\mathcal{C}$. Therefore \textsc{Coloring-plus-Alg} with $\textsc{Alg}  \in \{\textrm{NFS,SIS,FTG}\}$ is stable against any $(\rho,b)$-adversary in the radio network model.  
\end{proof}

\section{Conclusions and future work}
\label{sec:conclusions}

In this work, we studied the fundamental problem of stability in multi-hop wireless networks.

We introduced  a new family of combinatorial structures, which we call \emph{universally strong selectors}, that are used to provide a set of transmission schedules. Making use of these structures, combined with some known queuing policies, we  propose a packet-oblivious routing algorithm which is working without using any global topological information, and guarantees stability for certain injection rates.  We show that this protocol is asymptotically optimal regarding the injection rate for which stability is guaranteed.

Furthermore, we also introduced a packet-oblivious routing algorithm that guarantees stability for higher traffic. We also show that this protocol is optimal regarding the injection rate for which stability is guaranteed.
However, it needs to use some global information of the system topology.

A natural direction would be to study other classes of protocols; for instance, when packets are injected without pre-defined routes. 
Universally strong selectors are interesting on its own right -- finding more applications for them is a promising open direction.
Finally, exploring the reductions between various settings of adversarial routing could lead to new discoveries, as demonstrated in the last part of this work.

\bibliographystyle{plain}
\bibliography{packet_oblivious_arxiv_new1}

\appendix

\section{Auxiliary Lemma}
\label{app:auxiliary}

Here, we show that LIS, combined with a transmission schedule that guarantees a number of successful transmissions in some time interval, makes the resulting routing protocol stable (these lemmas  are adapted versions of analogous results about universal stability of the \textsc{LIS} protocol in wired network~\cite{AndrewsAFLLK01}).  

\begin{lemma}
\label{lem:LIS}
If there exists a $(\rho',T)$-frequent schedule $\mathcal{S}$, then using LIS as the queueing policy guarantees stability of the resulting routing protocol against any $(\rho, b)$-adversary for $\rho < \rho'$.
\end{lemma}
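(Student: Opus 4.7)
The plan is to adapt the classical universal stability argument for LIS in wired networks (Andrews, Awerbuch, Fern\'andez, Leighton, Liu, Kleinberg), replacing the per-edge capacity-$1$ service with the guarantee supplied by the $(\rho', T)$-frequent schedule $\mathcal{S}$. The overall strategy is to show, via induction on hop-count along packet routes, that every packet is delivered within bounded time of its injection, from which a uniform bound on the total backlog (and hence stability) follows by standard counting.

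First I would fix an arbitrary packet $p$ with injection time $t_0$ and route $(e_1,\dots,e_\ell)$, and define $D_k$ to be an upper bound, over all packets and all injection sequences consistent with the adversary, on the time elapsed between a packet's injection and the moment it finishes traversing its first $k$ edges. The base case $D_0=0$ is trivial. For the inductive step, suppose $D_{k-1}$ is finite and consider the queue at $e_k$ during the interval while $p$ waits there. Because LIS is used, only packets with strictly earlier injection time (ties broken arbitrarily) can overtake $p$; I would bound the number of such potential blockers present at $e_k$ during $p$'s wait.

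The counting has two ingredients. On the input side, the inductive hypothesis implies that any earlier packet $q$ whose route uses $e_k$ arrives at $e_k$ no later than its injection time plus $D_{k-1}$. Combined with the $(\rho,b)$-adversary constraint applied to edge $e_k$, the number of packets that have ever entered queue $e_k$ in any window $[s,s+T]$ is at most $\rho(T+D_{k-1})+b$. On the output side, the $(\rho',T)$-frequent schedule guarantees at least $\rho' T$ successful transmissions from $e_k$ in every window of length $T$ during which at least $\rho' T$ packets are queued there. Choosing $T$ sufficiently large that the amortized arrival rate $\rho + (\rho D_{k-1}+b)/T$ is still strictly below $\rho'$, a standard queueing argument shows that the queue at $e_k$ never exceeds some constant $Q_k$ determined by $\rho,\rho',T,b,D_{k-1}$. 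Consequently $p$ waits at most $Q_k/\rho'$ additional time (plus $T$ to absorb the $(\rho',T)$ granularity), which yields a finite $D_k$ expressed in terms of $D_{k-1}$.

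Iterating $\ell\le L$ times produces a uniform bound $D_L$ on the total delivery time of any packet. Stability then follows immediately: at any moment the backlog is at most the total number of packets injected in the last $D_L$ rounds, which by the adversary constraint on any fixed edge (summed over the $m$ edges) is $O(m(\rho D_L + b))$. The main obstacle I expect is the inductive step's bookkeeping --- specifically, coupling the ``injection-rate'' bound with the ``arrival-at-$e_k$'' bound through $D_{k-1}$, and handling the conditional nature of the $(\rho',T)$-frequent guarantee (which only fires when the queue is sufficiently non-empty). This is exactly where the strict inequality $\rho<\rho'$ must be exploited, by choosing $T$ large enough to make the additive slack $(\rho D_{k-1}+b)/T$ smaller than the gap $\rho'-\rho$.
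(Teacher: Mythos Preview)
There is a genuine gap in the inductive step. You define $D_k$ as the worst-case time for \emph{any} packet to traverse its first $k$ edges, and then assert that any older packet $q$ that can block $p$ at $e_k$ ``arrives at $e_k$ no later than its injection time plus $D_{k-1}$''. But $e_k$ is the $k$-th edge on $p$'s route; it need not be among the first $k-1$ edges of $q$'s route. Under LIS the only thing you know about a blocker $q$ is that it was injected before $p$; the edge $e_k$ may sit at position $j$ on $q$'s route for any $j$ up to $L$, so the inductive hypothesis (which only controls the first $k-1$ hops of every packet) gives no bound on when $q$ reaches $e_k$. Without that arrival-time bound, the estimate $\rho(T+D_{k-1})+b$ on arrivals into the $e_k$ queue during a window is unjustified, and the recursion for $D_k$ does not close.

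This hop-count induction is exactly the right skeleton for NFS (where a blocker at $p$'s hop $k$ is, by definition of the policy, at hop $\le k$ on its own route), but it does not transfer to LIS. The paper instead follows the original class-based argument of Andrews et~al.: time is partitioned into windows of length $T$, a class is the set of packets injected in a given window, and one bounds the delivery time of a packet in terms of the number $c$ of simultaneously active classes (Lemma~\ref{lem:delivery}), then shows $c$ is itself bounded via a fixed-point argument exploiting $\rho<\rho'$ (Lemma~\ref{lemma:classes}). The key point is that the count of potential blockers is controlled by the number of older \emph{classes} still alive, not by any per-hop arrival-time bound, so the position of $e_k$ on the blocker's route never enters.
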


Before we prove this lemma, we will introduce some additional notations and auxiliary lemmas.

Let $L$ be the length of the longest route in the system. Let us denote by class $i$ the set of packets injected during $i$-th window. A class $i$ is said to be \emph{active} during a window $w$ if and only if at some time during window $w$ there is some packet in the system of class $i' \leq i$.

Consider some packet $p$ injected during window $W_0$, whose path crosses links $e_1, e_2, \dots, e_L$, in this order. We use $W_i$ to denote the window, during which $p$ crossed link $e_i$. Let $c_w$ denote the number of active classes during window $w$. We define $c = \max_{ w \in [W_0,W_L) } c_w$. Then, we can bound the number of windows to deliver $p$.

\begin{lemma}
\label{lem:delivery}
$$W_L - W_0 \leq \frac{1-\left(1-\frac{\rho}{\rho'}\right)^L}{\rho \cdot T}\cdot (b-1) + c \cdot \left[1-\left(1-\frac{\rho}{\rho'}\right)^L\right] \ . $$
\end{lemma}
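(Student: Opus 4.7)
The plan is to derive a linear recurrence for $T_i := W_i - W_0$ and solve it in closed form. The shape of the target bound — a geometric series with ratio $\beta := 1-\rho/\rho'$ — strongly hints that the recurrence should read
\[
T_i \;\le\; \beta\, T_{i-1} + \alpha,
\qquad
\alpha \;=\; \tfrac{b-1}{\rho' T} + c\,\tfrac{\rho}{\rho'}.
\]
Since $T_0 = 0$ and $\beta < 1$ (because $\rho < \rho'$ by hypothesis), unrolling immediately yields
\[
T_L \;\le\; \alpha\cdot\frac{1-\beta^L}{1-\beta} \;=\; \Bigl(\tfrac{b-1}{\rho T}+c\Bigr)\bigl(1-(1-\rho/\rho')^L\bigr),
\]
which is exactly the stated inequality.

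First I would identify which packets can delay $p$ at link $e_i$. Under LIS, packets of class $w > W_0$ must yield to $p$, so the only potential blockers are packets of class $w' \le W_0$ that are still in the system while $p$ is at $e_i$. By definition, such a class must be active during window $W_0$, so there are at most $c_{W_0} \le c$ relevant classes; and by the $(\rho,b)$-adversary constraint, each class contributes at most $\rho T + b$ packets through the single link $e_i$. This identifies the pool of ``dangerous'' packets.

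Next I would quantify how many dangerous packets actually block $p$. For each window $w \in (W_{i-1}, W_i)$ in which $p$ fails to cross $e_i$, the $(\rho',T)$-frequent schedule serves at least $\rho' T$ packets from $e_i$'s queue (the queue must hold at least $\rho' T$ packets, for otherwise $p$ itself would be served within that window), and under LIS every such served packet is dangerous. Hence $(T_i - T_{i-1} - 1)\,\rho' T \le |D_i|$, where $D_i$ is the set of dangerous packets that must be cleared from $e_i$ before $p$ crosses. I would then split $D_i$ into (i) dangerous packets already at $e_i$ when $p$ arrives, bounded in terms of $c$ and $b$, and (ii) dangerous packets arriving at $e_i$ from earlier links during $(W_{i-1}, W_i]$; this second term is what couples $T_i$ to $T_{i-1}$ and produces the $\beta\,T_{i-1}$ coefficient after dividing through by $\rho' T$. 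Solving the recurrence as above closes the proof.

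The main obstacle will be step~2, specifically producing the coefficient $\beta = 1 - \rho/\rho'$. A naïve count gives $|D_i| \le c(\rho T + b)$, a constant in $i$, which only yields $T_L = O(L)$ — too weak. The tightening requires arguing that each window $p$ has already spent upstream ``uses up'' dangerous budget that would otherwise block $p$ at $e_i$: the number of new dangerous arrivals at $e_i$ during $(W_{i-1}, W_i]$ scales as $\rho T \cdot (T_i - T_{i-1})$ rather than as the total available budget, and the interplay between this linear growth and the $\rho' T$ service rate per window is precisely what turns the difference $T_i - T_{i-1}$ into a contraction of $T_{i-1}$ by the factor $\rho/\rho'$. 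A secondary technical point is carefully justifying the ``queue at least $\rho' T$ packets'' proviso of the frequent schedule in every window where $p$ is blocked, so that the lower bound on transmissions actually applies throughout $p$'s stay at $e_i$.
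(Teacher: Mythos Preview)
Your target recurrence $T_i \le \beta T_{i-1} + \alpha$ with $\beta=1-\rho/\rho'$ and $\alpha=\frac{b-1}{\rho'T}+c\,\frac{\rho}{\rho'}$ is exactly right, and your unrolling is correct. The gap is in how you propose to obtain it.

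Your splitting of $D_i$ into ``already at $e_i$'' and ``arriving during $(W_{i-1},W_i]$'' does not deliver the $\beta T_{i-1}$ term. The second piece is governed by packet \emph{movement} through the network, not by the adversary's injection constraint, so there is no reason it is bounded by $\rho T(T_i-T_{i-1})$. And even granting that bound, combining it with $(T_i-T_{i-1})\rho'T \le |D_i|$ only yields $(T_i-T_{i-1})(\rho'-\rho)T \le |D_i^{(i)}|$, i.e.\ a constant bound on each increment and hence $T_L=O(L)$ --- precisely the weakness you flagged. Nothing here produces a dependence of $|D_i|$ on $T_{i-1}$.

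The missing idea is to count active classes at window $W_{i-1}$ rather than at $W_0$. Because $p$ is still in the system, every class $W_0,W_0+1,\ldots,W_{i-1}$ is active in window $W_{i-1}$; since at most $c$ classes are active there, at most $c-(W_{i-1}-W_0)=c-T_{i-1}$ active classes are \emph{older} than $p$. Hence every packet that can preempt $p$ at $e_i$ was injected in a contiguous block of at most $(c-T_{i-1})$ windows ending at $W_0$, and the $(\rho,b)$-adversary bound applied to that single interval gives at most $\rho T(c-T_{i-1})+b-1$ blockers through $e_i$. Dividing by the $\rho'T$ service guarantee of the schedule immediately yields
\[
W_i-W_{i-1}\ \le\ \frac{\rho T(c-T_{i-1})+b-1}{\rho' T},
\]
which is exactly $T_i \le \beta T_{i-1}+\alpha$. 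This ``class-squeeze'' pigeonhole --- the longer $p$ has been in the system, the fewer older classes can still be alive --- is what converts the naive $c(\rho T+b)$ count into the contracting $\rho T(c-T_{i-1})$ count; your proposed mechanism via upstream arrivals does not achieve this.
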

\begin{proof}

The packet $p$ reaches link $e_i$ for the first time in window $W_{i-1}$. Since $p$ is in the system, during window $W_{i-1}$ all classes $[W_0,W_{i-1}]$ are active. Therefore, according to the definition of $c$, there are at most $c - (W_{i-1} - W_0)$ active classes with packets older than packet $p$. Packets in those classes are the only packets that take priority over packet $p$ on link $e_i$. The oldest such packet was injected during window $w_{first} = W_0 - [c - (W_{i-1} - W_0)] = W_{i-1} - c$. Since its injection, at most $(W_0 - w_{first}) \cdot \rho \cdot T + b = [c - (W_{i-1} - W_0)] \cdot \rho \cdot T + b$ packets older than $p$ could be injected into the system. Therefore, there are at most $ [c - (W_{i-1} - W_0)] \cdot \rho \cdot T + b - 1 $ packets that will take priority over packet $p$ on link $e_i$. Since each link transmits at least $\rho' T$ times per window, the number of windows until $p$ transgresses link $e_i$ is at most
$$W_i - W_{i-1} \leq  \frac{\rho \cdot T \cdot (c + W_0 - W_{i-1}) + b -1}{\rho' \cdot T} \ .$$
Hence, 
$$W_i \leq \left(1-\frac{\rho}{\rho'}\right) W_{i-1} + \frac{\rho}{\rho'} \left(c + W_0\right) + \frac{b-1}{\rho' \cdot T} \ .$$

Therefore, solving the recurrence, we get:

$$W_L   \leq   W_0 + c\left[ 1-(1-\frac{\rho}{\rho'})^L \right] + \frac{1-(1-\frac{\rho}{\rho'})^L}{\rho \cdot T} (b-1),$$
which proves the lemma. 
\end{proof}

Now we have a bound on how long packet $p$ can be in the system, depending on value $c$. We will show that $c$ is bounded by a constant, depending only on network and adversary parameters, i.e., $L$, $\rho$ and $b$, and value $\rho'$ from Lemma~\ref{lem:LIS}.

\begin{lemma}
\label{lemma:classes}
There are never more than 
$$ (b-1) \cdot \frac{1-(1-\frac{\rho}{\rho'})^L}{(1-\frac{\rho}{\rho'})^L \cdot \rho \cdot T}$$
active classes in the system.
\end{lemma}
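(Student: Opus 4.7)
The plan is to combine Lemma~\ref{lem:delivery} with an observation about the paper's definition of \emph{active}: if $i_0(w)$ denotes the oldest class still having at least one packet in the system during window $w$, then for every $i \in \{i_0(w), i_0(w)+1, \dots, w\}$ the class $i$ qualifies as active at $w$ (witnessed by any surviving packet of class $i_0(w) \leq i$), so $c_w = w - i_0(w) + 1$. Bounding the number of active classes therefore reduces to bounding how far back in time the oldest undelivered packet was injected.

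Write $\alpha := \frac{b-1}{\rho T}\bigl(1-(1-\rho/\rho')^L\bigr)$ and $\beta := 1-(1-\rho/\rho')^L$, so that Lemma~\ref{lem:delivery} reads $W_L - W_0 \leq \alpha + c\beta$. I would proceed by a supremum/induction argument on windows. Set $B := \sup_w c_w$ and consider any window $w^*$ where $c_{w^*}$ is arbitrarily close to $B$. By the observation above, some packet $p$ of class $W_0 = w^* - c_{w^*} + 1$ is still in the system at $w^*$; when $p$ is eventually delivered in some window $W_L \geq w^*$, one has $W_L - W_0 \geq c_{w^*} - 1$. Applying Lemma~\ref{lem:delivery} to $p$, and using $c = \max_{u \in [W_0,W_L)} c_u \leq B$, gives $c_{w^*} - 1 \leq \alpha + B\beta$. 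Letting $c_{w^*} \to B$ yields $B(1-\beta) \leq \alpha + 1$, i.e.\ $B \leq (\alpha+1)/(1-\beta)$. Substituting the expressions for $\alpha$ and $\beta$ reproduces the bound claimed in the lemma (up to an additive $1/(1-\rho/\rho')^L$ term that the authors' presentation appears to absorb).

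The main obstacle is the self-reference built into Lemma~\ref{lem:delivery}: its bound on a packet's delivery time involves the peak number of active classes over the packet's own journey, which is exactly the quantity we are trying to control. This is resolved by the supremum/induction device above, which commits upfront to a uniform upper bound $B$, plugs that bound into Lemma~\ref{lem:delivery}, and closes the loop via the resulting fixed-point inequality; equivalently, one can argue by induction on the window index $w$, with hypothesis ``$c_u \leq B$ for all $u < w$'', and verify the inductive step by the same calculation. A secondary subtlety is that Lemma~\ref{lem:delivery} speaks of a packet's completed journey, whereas the witnessing packet $p$ need not yet be delivered at $w^*$; this is handled by applying the inequality retrospectively, once $p$ is delivered at some $W_L \geq w^*$, at which point $W_L - W_0 \geq c_{w^*} - 1$ still holds.
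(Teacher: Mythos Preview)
Your approach is essentially the paper's: both close the self-reference in Lemma~\ref{lem:delivery} via the same fixed-point inequality, and your ``induction on $w$'' alternative is exactly what the paper does, phrased as a minimal counterexample. The paper lets $w$ be the \emph{first} window with at least $c'+1$ active classes (where $c'=(\alpha+1)/(1-\beta)$ in your notation); the oldest surviving packet $q$, of class $w-c'$, has then spent $c'$ windows in the system by the end of window $w-1$, yet Lemma~\ref{lem:delivery} with $c\le c'$ (valid because every earlier window satisfies $c_u\le c'$) forces delivery within $c'-1$ windows, a contradiction.

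Your supremum formulation, however, is not rigorous as written: setting $B=\sup_w c_w$ and deducing $B(1-\beta)\le\alpha+1$ presupposes $B<\infty$, which is the very thing to be shown (and likewise presupposes that the witnessing packet $p$ is eventually delivered). The first-failure framing also avoids your ``retrospective'' application of Lemma~\ref{lem:delivery}: instead of waiting until $p$ is delivered at some $W_L\ge w^*$ --- which would require bounding $c_u$ on $[w^*,W_L)$, beyond the inductive hypothesis --- the paper shows that $q$ \emph{must already} have been delivered before window $w$, so the entire interval $[W_0,W_L)$ lies where $c_u\le c'$ is known.

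Finally, your observation about the additive term is correct: the paper's own proof establishes the bound $c'=(b-1)\dfrac{1-(1-\rho/\rho')^L}{(1-\rho/\rho')^L\,\rho\,T}+\dfrac{1}{(1-\rho/\rho')^L}$, which exceeds the lemma's stated bound by exactly $1/(1-\rho/\rho')^L$; this is harmless for the downstream application, since Lemma~\ref{lem:LIS} only requires finiteness.
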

\begin{proof}
Let $c' =(b-1) \cdot \frac{1-(1-\frac{\rho}{\rho'})^L}{(1-\frac{\rho}{\rho'})^L \cdot \rho \cdot T} + \frac{1}{(1-\frac{\rho}{\rho'})^L}$. Assume, by contradiction, that a window $w$ is the first window during which there are at least $c'+1$ active classes. Hence, at the end of window $w-1$, there is a packet $q$ that was in the system for $c'$ windows, and no more than $c'$ classes were active until the end of window $w-1$.

According to Lemma~\ref{lem:delivery}, packet $q$ is delivered in at most 
$$c'\left[ 1-(1-\frac{\rho}{\rho'})^L \right] + \frac{1-(1-\frac{\rho}{\rho'})^L}{\rho \cdot T} (b-1) = $$ 
$$ = c'\left[ 1-(1-\frac{\rho}{\rho'})^L \right] + \left(c' - \frac{1}{(1-\frac{\rho}{\rho'})^L}\right) \cdot (1-\frac{\rho}{\rho'})^L =  $$
$$ = c' - 1 $$
windows, which gives a contradiction.
\end{proof}

Now that we have proven that any packet $p$ spends bounded time in the system, we can prove Lemma~\ref{lem:LIS}.

\begin{proof}[Proof of Lemma~\ref{lem:LIS}]
In Lemma~\ref{lemma:classes}, it has been shown that $c$ is bounded. By Lemma~\ref{lemma:classes}, this implies that $W_L - W_0$ is also bounded. This result guarantees that each packet spends a bounded time in the system. That means that such system is stable against any $(\rho,b)$-adversary, provided that $\rho' > \rho$, which completes the proof of the lemma. 
\end{proof}

\section{Proof of Theorem~\ref{the:impossibility}}
\label{app:impossibility}

\newtheorem*{the:impossibility}{Theorem \ref{the:impossibility}}

\begin{the:impossibility}
No routing algorithm that only makes use of upper bounds on the number of links and on the network's degree guarantees stability for all networks of degree at most $\Delta$, provided the injection rate $\rho = \omega(1 / \Delta^2)$.
\end{the:impossibility}
\begin{proof}
Assume, to the contrary, that there exists a routing algorithm $ALG$ such that, given any network of which it is aware of both its number of links and its degree, it guarantees that there are no more than $Q_{max}$ packets in the system at all times against all adversaries with injection rate $\rho = \omega(1 / \Delta^2)$.
Note that $Q_{max}$ could be a function on $\rho,n$, but a constant with respect to time.


\begin{figure}
\centering
\includegraphics[scale=0.5]{./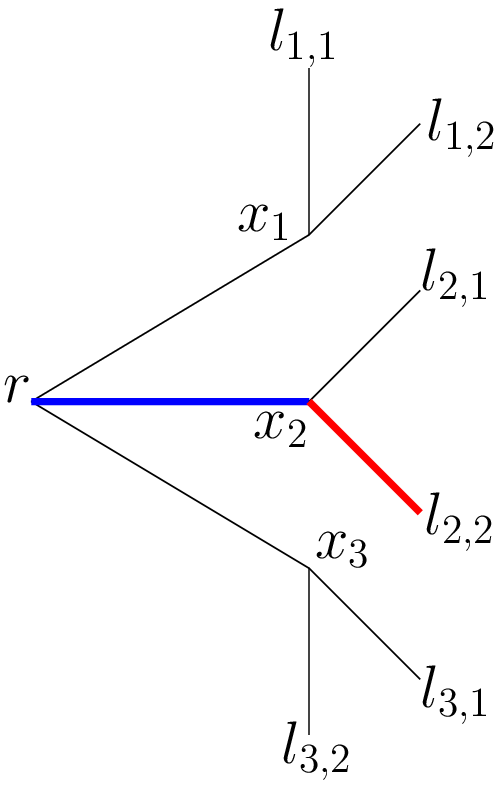}
\hspace{10mm}
\includegraphics[scale=0.5]{./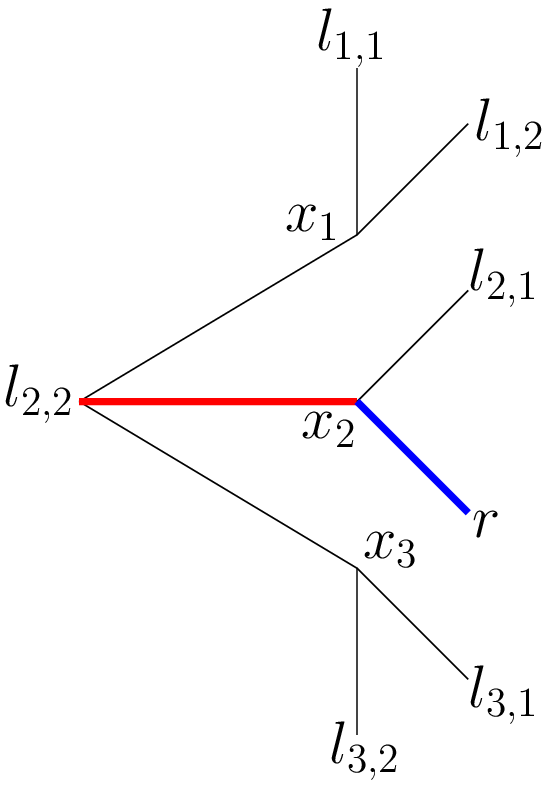}
\caption{Example of a tree $T$ (on the left) and tree $T_{2,2}$ (on the right) for $\Delta=3$. Nodes $r$ and $l_{2,2}$ swapped places, which means that edges $(x_2, r)$ and $(x_2,l_{2,2})$ (marked in blue and red, respectively) swapped their places as well.}
\label{fig:optimality}
\end{figure}

Consider a complete $\Delta$-regular tree $T$ of depth 2, rooted at $r$. Let us denote the nodes at distance $1$ from $r$ as $x_i$, for $i=1,\dots,\Delta$ and leaves adjacent to $x_i$ as $l_i^j$ for $j=1,\dots,\Delta-1$. Let us generate a family $\cF$ of trees $T_{i,j}$ as follows: swap the root $r$ of $T$ with leaf $l_{i,j}$ of $T$ (see Figure~\ref{fig:optimality}).
Note that edges $(x_i,r)$ and $(x_i,l_{i,j})$ swapped places, edges $(x_k,r)$ for $k\neq i$ were removed and in their place edges $(x_k,l_{i,j})$ appeared. Other edges, i.e., $(x_k, l_{k,a})$ for $a=1,\dots,\Delta-1$ and $(x_i, l_{i,b})$ for $b\neq j$, remain in the same place in both $T$ and $T_{i,j}$.

Note that edges $(x_k, l_{k,a})$ (for $k=1,\dots,\Delta$ and $a=1,\dots,\Delta-1$) exist in every tree in $\cF \cup \{T\}$. Let us denote the set of these edges as $E$.

Consider an adversary ${\cal A}$ that, starting from round $0$, injects $1$ packet into every edge outgoing from $x_i$ (for $i=1,\dots,\Delta$) every $1 / \rho$ rounds. Such adversary is a $(\rho,1)$-adversary in each tree in $\cF \cup \{T\}$.

Note that each packet injected into an edge incoming into the root of a tree $T' \in \cF \cup \{T\}$ cannot be simultaneously transmitted with any other packet injected by ${\cal A}$. In particular, it cannot be simultaneously transmitted with any other packet on edges in $E$.

Consider a time prefix of length $\tau$ rounds. Consider any edge $e \in E$. Edge $e$ is incident to the root in some tree $T' \in \cF \cup \{T\}$. $ALG$ must successfully transmit from $e$ in $T'$ in at least $\rho \tau - Q_{max}$ rounds during the considered prefix, since \textsc{ALG} is stable. This means that all other edges in $E$ must not transmit in those rounds. Since there are $\Delta (\Delta-1)$ possible choices of edge $e \in E$, each choice requiring all other edges in $E$ not to transmit in $\rho \tau - Q_{max}$ rounds, we get that each edge in $E$ must not transmit in $\Delta (\Delta-1) \cdot (\rho \tau - Q_{max})$ rounds and must transmit in $\rho \tau - Q_{max}$ rounds, for a total of  $\Delta^2 \cdot (\rho \tau - Q_{max})$ rounds in the prefix of length $\tau$. For $\rho = \omega(1/\Delta^2)$, we can choose $\tau$ such that $\Delta^2 \cdot (\rho \tau - Q_{max}) > \tau$, which gives us a contradiction. 
\end{proof}

\section{Proof of Theorem~\ref{theorem:optimality_coloring}}
\label{app:optimality_coloring}

\newtheorem*{theorem:optimality_coloring}{Theorem \ref{theorem:optimality_coloring}}

\begin{theorem:optimality_coloring}
No algorithm can be stable for all networks against a $(\rho,b)$-adversary for $\rho>1/\chi(H)$.
\end{theorem:optimality_coloring}
\begin{proof}
Let us consider a network graph $G$ on $n$ nodes that is a clique. For such network, the collision graph $H$ is also a clique, since each link is in conflict with each other link. Collision graph $H$ has $n^2-n$ vertices and requires $n^2-n$ colors to be colored, i.e., $\chi(H) = n^2-n$.

Consider a $(1/\chi(H) + \varepsilon, 2)$-adversary for some $\varepsilon>0$ that after every $\chi(H)$ rounds injects one packet into each link (starting in round $0$) and simulateously after each $1/\varepsilon$ rounds injects another packet into each link (starting in round $0$). Therefore, in any prefix of $T = k \cdot \chi(H)$ rounds for $k \in \mathbb{N}$, the adversary injects $(k+1) + \lfloor T/\varepsilon \rfloor+1$ packets into each link, i.e., $I=(k + \lfloor T/\varepsilon \rfloor + 2) \cdot (n^2-n)$ packets into the system.

On the other hand, since $G$ is a clique, any algorithm can successfully transmit at most $1$ packet per round in the entire network. Therefore, in $T = k \cdot \chi(H) = k \cdot (n^2-n)$ rounds at most $k \cdot (n^2-n)$ packets can be transmitted. So, at the end of a prefix of length $T$, there are at least $I - k \cdot (n^2-n) = (\lfloor T/\varepsilon \rfloor + 2) \cdot (n^2-n)$ packets remaining in the system. For $T$ approaching infinity, the number of packets remaining in the queues grows to infinity. This means that the queues are not bounded and the algorithm is not stable.
\end{proof}

\section{Reduction to the failure model}
\label{app:reduction}

First, let us explain the (wired) failure model~\cite{AlvarezBDSF05}. Given is a network graph $G$. A $(\rho,b)$-adversary in the failure model injects paths (packets) into $G$ and generates failures in such a way that in any interval $I$ the following inequality holds:
$$ Arr_e(I) + Fail_e(I) \leq \rho |I| + b,$$
where $Arr_e(I)$ is the number of packets injected during interval $I$ that pass through edge $e$ and $Fail_e(I)$ is the number of failures on edge $e$ generated during interval $I$. Each link $e$ that has some packets waiting in its queue can transmit a packet in every round, i.e., there are no collisions between edges.

There are known stable algorithms for packet scheduling in the failure model, such as NFS (Nearest-From-Source), SIS (Shortest-In-System), or FTG (Farthest-To-Go) against $(\rho, b)$-adversary with any $\rho < 1$~\cite{AlvarezBDSF05}.

\begin{lemma}
\label{lem:reduction}
Suppose we have a stable algorithm \textsc{Alg}  against any $(\rho'',b)$-adversary $ADV_{fail}$ in the failure model on graph $G$. Suppose we have a 
$(\rho', T)$-frequent schedule $\mathcal{S}$.

Then we can build a stable algorithm \textsc{$\mathcal{S}$-Plus-Alg} against any $(\rho,b)$-adversary $ADV_{RN}$ in the radio network model on graph $G$, for any $\rho$ such that $\rho < \rho'$ and $\rho'' \geq 1+\rho-\rho'$.
\end{lemma}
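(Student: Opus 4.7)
The plan is to exhibit a round-by-round simulation from the radio network execution of \textsc{$\mathcal{S}$-Plus-Alg} to a wired failure model execution of \textsc{Alg}, and then invoke the stability hypothesis on \textsc{Alg}. First I would formally define \textsc{$\mathcal{S}$-Plus-Alg}: in each round $t$, link $e$ attempts a transmission if and only if schedule $\mathcal{S}$ grants $e$ a slot in round $t$; when several packets are queued at $e$, the packet transmitted is chosen by \textsc{Alg}'s priority rule, with injections delivered by $ADV_{RN}$. Because $\mathcal{S}$ is collision-free by definition, every scheduled slot at a nonempty queue produces a successful transmission.

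Next, given any $(\rho,b)$-adversary $ADV_{RN}$ in the radio model, I would construct a failure-model adversary $ADV_{fail}$ that (i) injects exactly the same packets along the same paths in the same rounds as $ADV_{RN}$, and (ii) declares link $e$ to fail in round $t$ if and only if $\mathcal{S}$ does not grant $e$ a slot in round $t$. A straightforward induction on rounds then shows that the queue contents produced by \textsc{$\mathcal{S}$-Plus-Alg} under $ADV_{RN}$ in the radio model are identical, link by link and round by round, to those produced by \textsc{Alg} under $ADV_{fail}$ in the failure model, since in both cases the same packets are injected, the same packet is chosen by \textsc{Alg} from each nonempty queue, and transmission happens exactly when $\mathcal{S}$ grants a slot.

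The core analytic step is to verify that $ADV_{fail}$ satisfies the admissibility inequality of the failure model with rate $1+\rho-\rho'$. Fix a link $e$ and an interval $I$. Arrivals are controlled directly by $ADV_{RN}$: $\mathrm{Arr}_e(I) \le \rho |I| + b$. For failures, the $(\rho',T)$-frequent property of $\mathcal{S}$ gives at least $\rho' T$ slots for $e$ in every window of length $T$, hence at most $(1-\rho')T$ non-slots per window. Covering $I$ by at most $\lceil |I|/T\rceil$ consecutive length-$T$ windows yields
\[
\mathrm{Fail}_e(I) \le \left\lceil |I|/T\right\rceil \cdot (1-\rho')T \le (1-\rho')|I| + (1-\rho')T.
\]
Adding the two bounds gives $\mathrm{Arr}_e(I)+\mathrm{Fail}_e(I) \le (1+\rho-\rho')|I| + b''$ for $b'' = b + (1-\rho')T$, so $ADV_{fail}$ is a $(1+\rho-\rho',b'')$-adversary, and by the hypothesis $\rho'' \ge 1+\rho-\rho'$ it is admissible for \textsc{Alg}. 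Since $\rho<\rho'$ ensures $1+\rho-\rho'<1$, the stability of \textsc{Alg} against $(\rho'',b'')$-adversaries applies and bounds the queue sizes uniformly; by the simulation the same bounds hold in the radio network execution, which is precisely stability of \textsc{$\mathcal{S}$-Plus-Alg}.

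The main obstacle I expect is conceptual rather than computational: specifying the simulation tightly enough that the induction step is transparent, in particular handling scheduled slots that fire on empty queues (they produce no effect in either model, so they are consistent) and rounds that lie outside any full window at the endpoints of $I$ (absorbed in the additive $O(T)$ term in $b''$). Once the simulation lemma is in place, the rate accounting above is routine, and the two inequalities $\rho<\rho'$ and $\rho''\ge 1+\rho-\rho'$ in the statement are exactly the conditions that make the construction pass into the regime where \textsc{Alg} is known to be stable.
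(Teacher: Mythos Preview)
Your overall strategy is the same as the paper's: simulate the radio-network run by a run in the wired failure model, count arrivals plus failures on each link, and invoke stability of \textsc{Alg}. However, there is one genuine slip.

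You assert that ``$\mathcal{S}$ is collision-free by definition, so every scheduled slot at a nonempty queue produces a successful transmission'' and accordingly declare a failure exactly when $\mathcal{S}$ does \emph{not} grant a slot. That is not what a $(\rho',T)$-frequent schedule promises: it only guarantees that each link has at least $\rho' T$ \emph{successful} transmissions per window, not that every scheduled slot is collision-free. In the selector-based schedules of the paper, a link may well be scheduled (its matrix entry is $1$) yet collide with a conflicting link that is also scheduled. With your failure rule, such a round would be a non-failure in the wired model (so \textsc{Alg} transmits) but a collision in the radio model (so nothing leaves the queue); the two executions diverge and the induction breaks. The fix is immediate and is exactly what the paper does: declare $e$ to fail in round $t$ whenever $e$ does \emph{not successfully} transmit under $\mathcal{S}$ in round $t$ (either not scheduled or scheduled but colliding). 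The per-window bound $\mathrm{Fail}_e \le (1-\rho')T$ then follows directly from the definition, and your covering argument for arbitrary $I$ goes through unchanged.

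On a minor point, your handling of general intervals (yielding burstiness $b'' = b + (1-\rho')T$) is actually more careful than the paper, which checks only windows of length exactly $T$ and writes burstiness $b$. Just note that the lemma's hypothesis fixes the burstiness parameter to $b$; you are tacitly using that \textsc{Alg} is stable for \emph{any} burstiness at rate $\rho''$, which holds for the policies in question but is slightly stronger than what is literally assumed.
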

\begin{proof}
The stable algorithm in each round has two steps:
\begin{enumerate}
\item Determine which links transmit, according to a $(\rho', T)$-frequent schedule $\mathcal{S}$ for some parameters $\rho'$ and $T$,
\item Determine, for each link $e$, which packet awaiting in a queue of link $e$ to transmit, according to \textsc{Alg}.
\end{enumerate}

We can think of rounds when $\mathcal{S}$ does not successfully transmit a packet via link $e$ due to a collision as failures on link $e$ in the failure model. 
Schedule $\mathcal{S}$ guarantees that each link $e$ has at most $(1-\rho')T$ transmission blocked in any interval $I$ of length $T$. This means that each link $e$ has at most $Fail_e(I) \leq (1-\rho')T$ failures during $I$. Furthermore, $ADV_{RN}$ can inject at most $Arr_e(I) \leq \rho T + b$ packets passing through each edge $e$ during $I$.
$$ Arr_e(I) + Fail_e(I) \leq \rho T + b +  (1-\rho')T = T(1+\rho-\rho') + b$$
Therefore, the graph $G$ with packet arrivals from $ADV_{RN}$ and failures being collisions generated by $\mathcal{S}$ is an instance of the failure model with a
$(1+\rho -\rho', b)$-adversary. That means that using \textsc{Alg} to compute which packet to chose for each link at each round guarantees stability, provided $\rho'' \geq 1+\rho-\rho'$.
\end{proof}

\end{document}